\DeclareMathOperator*{\cov}{Cov} 
\DeclareMathOperator*{\var}{Var}	
\DeclareMathOperator*{\bias}{Bias}	
\DeclareMathOperator*{\mse}{MSE}	
\newtheorem{theorem}{Theorem}
\newtheorem{corollary}[theorem]{Corollary}
\newtheorem{remark}{Remark} 
\newtheorem*{problemdefinition}{Problem Definition} 
\newcommand{\samplingbudget}{|S|}   
\newcommand{\estIP}{I^{{\samplingbudget}}}  
\newcommand{\estRW}{T^{{\samplingbudget}}_{RW}}  
\newcommand{\estFN}{T^{{\samplingbudget}}_{FN}}   
\newcommand{\estUN}{T^{{\samplingbudget}}_{UN}} 
\newcommand{\truevalue}{\bar{f}}   
\begin{document}
%
\title{``What Do Your Friends Think?": \\Efficient Polling Methods for Networks Using Friendship Paradox}
%
%
%
%

\author{Buddhika~Nettasinghe,~\IEEEmembership{Student~Member,~IEEE}
	and~Vikram~Krishnamurthy,~\IEEEmembership{Fellow,~IEEE}
\IEEEcompsocitemizethanks{\IEEEcompsocthanksitem Authors are with the School of Electrical and Computer Engineering, Cornell~University.\protect\\
E-mail:  \{dwn26, vikramk\}@cornell.edu.\IEEEcompsocthanksitem This material is based upon work supported, in part by, the U. S. Army Research Laboratory and the U. S. Army Research Office under grants 12346080 and W911NF-19-1-0365 and, National Science Foundation under grant 1714180.}\\
}

\IEEEtitleabstractindextext{%
\begin{abstract}
  This paper deals with randomized polling of a social network. In the case of forecasting the outcome of an election between two candidates A and B, classical intent polling asks randomly sampled individuals: \textit{who will you vote for?} Expectation polling asks: \textit{who do you think will win?} In this paper, we propose a novel neighborhood expectation polling (NEP) strategy that asks randomly sampled individuals: \textit{what is your estimate of the fraction of votes for A?} Therefore, in NEP, sampled individuals will naturally look at their neighbors (defined by the underlying social network graph) when answering this question.  Hence, the mean squared error (MSE) of NEP methods rely on selecting the optimal set of samples from the network. To this end, we propose three NEP algorithms for the following cases: (i)~the social network graph is not known but, random walks (sequential exploration) can be performed on the graph (ii) the social network graph is unknown. For both cases, algorithms based on a graph theoretic consequence called \textit{friendship paradox} are proposed. Theoretical results on the dependence of the MSE of the algorithms on the properties of the network are established. Numerical results on real and synthetic data sets are provided to illustrate the performance of the algorithms. 
\end{abstract}

\begin{IEEEkeywords}
opinion polling, election forecasting, expectation polling, friendship paradox, variance reduction, stochastic ordering, degree distribution, graph sampling,  social networks, social sampling
\end{IEEEkeywords}}

\maketitle

\IEEEdisplaynontitleabstractindextext

%
\IEEEpeerreviewmaketitle

\IEEEraisesectionheading{\section{Introduction}
\label{sec:introduction}}
\IEEEPARstart{T}his paper deals with randomized polling of a social network with a possibly unknown structure. In the case of forecasting the outcome of an election between two candidates A and B, classical intent polling asks uniformly sampled individuals: \textit{who will you vote for?} Expectation polling asks: \textit{who do you think will win?} In this paper, we propose a novel neighborhood expectation polling strategy that asks non-uniformly sampled individuals: \textit{what is your estimate of the fraction of votes for A?} Next, we formally define the problem, explain the solution approach and the related work that motivates it.

Consider a social network represented by an undirected graph $G = (V,E)$ where, each node $v\in V$ has a label $f(v) \in \{0,1\}$.  A pollster can query a total of ${\samplingbudget}$ (called the sampling budget) number of individuals from this social network. 
\begin{problemdefinition}
Estimate,
\begin{equation}
	\label{eq:true_value}
	\truevalue  = \frac{ \vert\{v\in V: f(v) = 1\}\vert  }{\vert V\vert} 
\end{equation} which is the fraction of nodes with label 1, with a sampling budget $\samplingbudget\ll \vert V \vert$ for the following cases:
\begin{itemize}
	\item Case 1 -  graph ${G = (V,E)}$ is not known but, the graph can be explored sequentially using a random walk
	
	\item Case 2 - graph ${G = (V,E)}$ is not known but, the set of nodes~$V$ can be uniformly sampled
\end{itemize} 
\end{problemdefinition}

We propose a class of polling methods that we call {neighborhood expectation polling (NEP)} to address the above problem\footnote{Applications of this problem include forecasting the outcome of an upcoming election\cite{tumasjan2010}, estimating the fraction of individuals infected with a disease \cite{gile2011}, estimating the number of individuals interested in buying a certain product (a market research). More specific real world examples for case~1 and case~2 are discussed in Sec.~\ref{subsec:RW_method} and Sec.~\ref{subsec:FN_method} respectively.}. In NEP, a set $S \subset V$ of individuals from the social network $G = (V,E)$ are selected and asked, 
\begin{center}
	\textit{``What is your estimate of the fraction of people with label 1?"}.
\end{center}
When trying to estimate an unknown quantity about the world, any individual naturally looks at her neighbors.  Therefore, each sampled individual $s \in S$ would provide the fraction of their neighbors $\mathcal{N}(s)$, with label $1$. In other words, the response of the individual $s\in S$ for the NEP query would be, 
\begin{align}
q(s) &= \frac{\vert\{u \in \mathcal{N}(s): f(u) = 1 \}\vert}{\vert \mathcal{N}(s)\vert}. \label{eq:response_to_query}
\end{align}
Then, the average of all the responses $\frac{\sum_{s \in S}q(s)}{\vert S \vert}$ is used as the NEP estimate of the fraction $\bar{f}$. 

\subsection{Context}
\label{subsec:context}
 {\bf Why call it NEP?} NEP takes its name from the fact that, the response $q(s)$ of each sampled individual $s\in S$ is the expected label value among her neighbors i.e. ${q(s) = \mathbb{E}\{f(U)\}}$ where, $U$ is a random neighbor of the sampled individual $s\in S$.

\begin{figure*}[]
\centering
\begin{subfigure}[!h]{0.3\textwidth}
	\centering
	\includegraphics[width=2.3in]{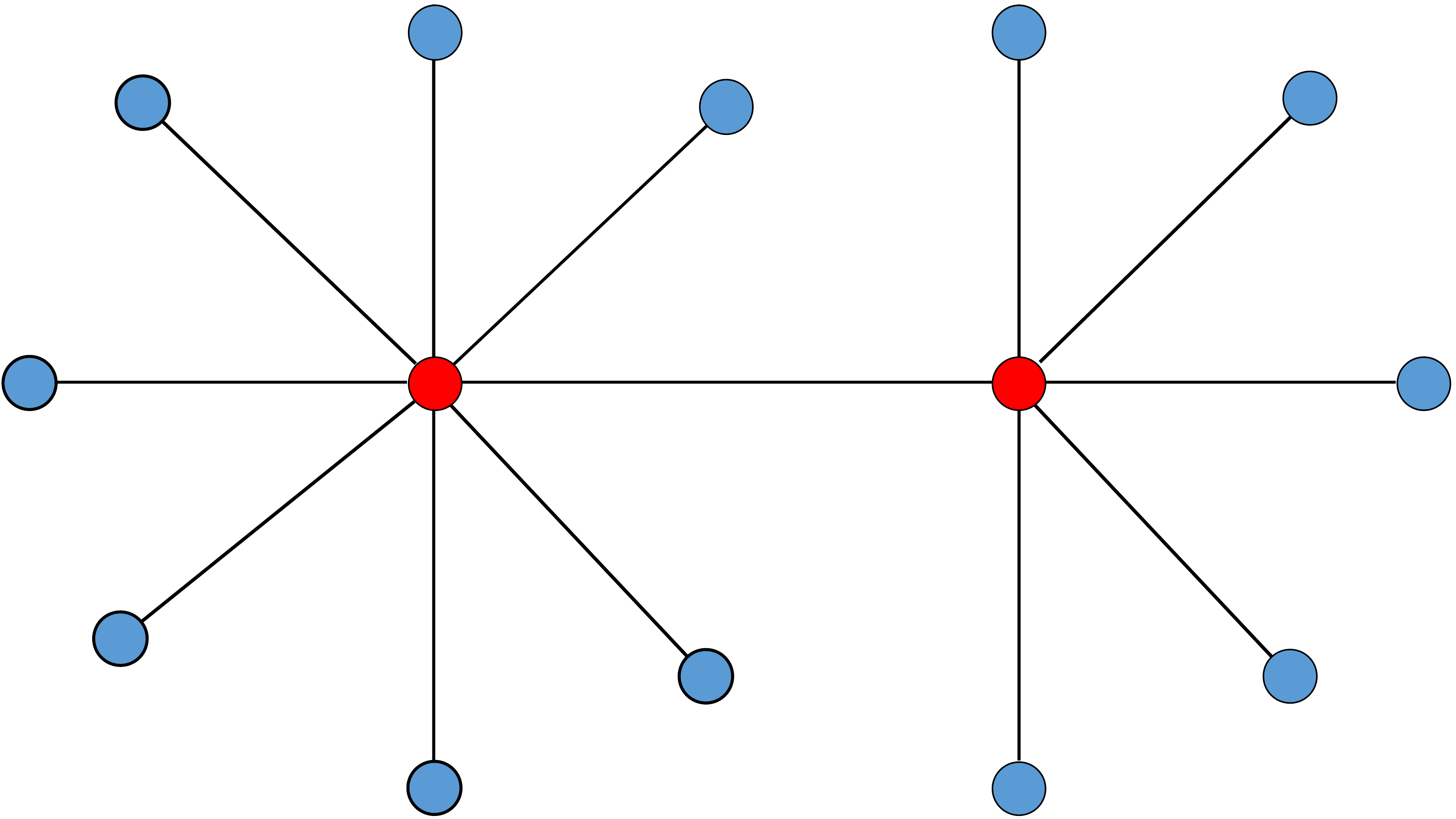}
	\caption{Network $G_1$: labels are highly correlated with the degrees of nodes }
	\label{subfig:NEP_larger_bias}
\end{subfigure}\hfill
\begin{subfigure}[!h]{0.3\textwidth}
	\centering
	\includegraphics[width=2.3in]{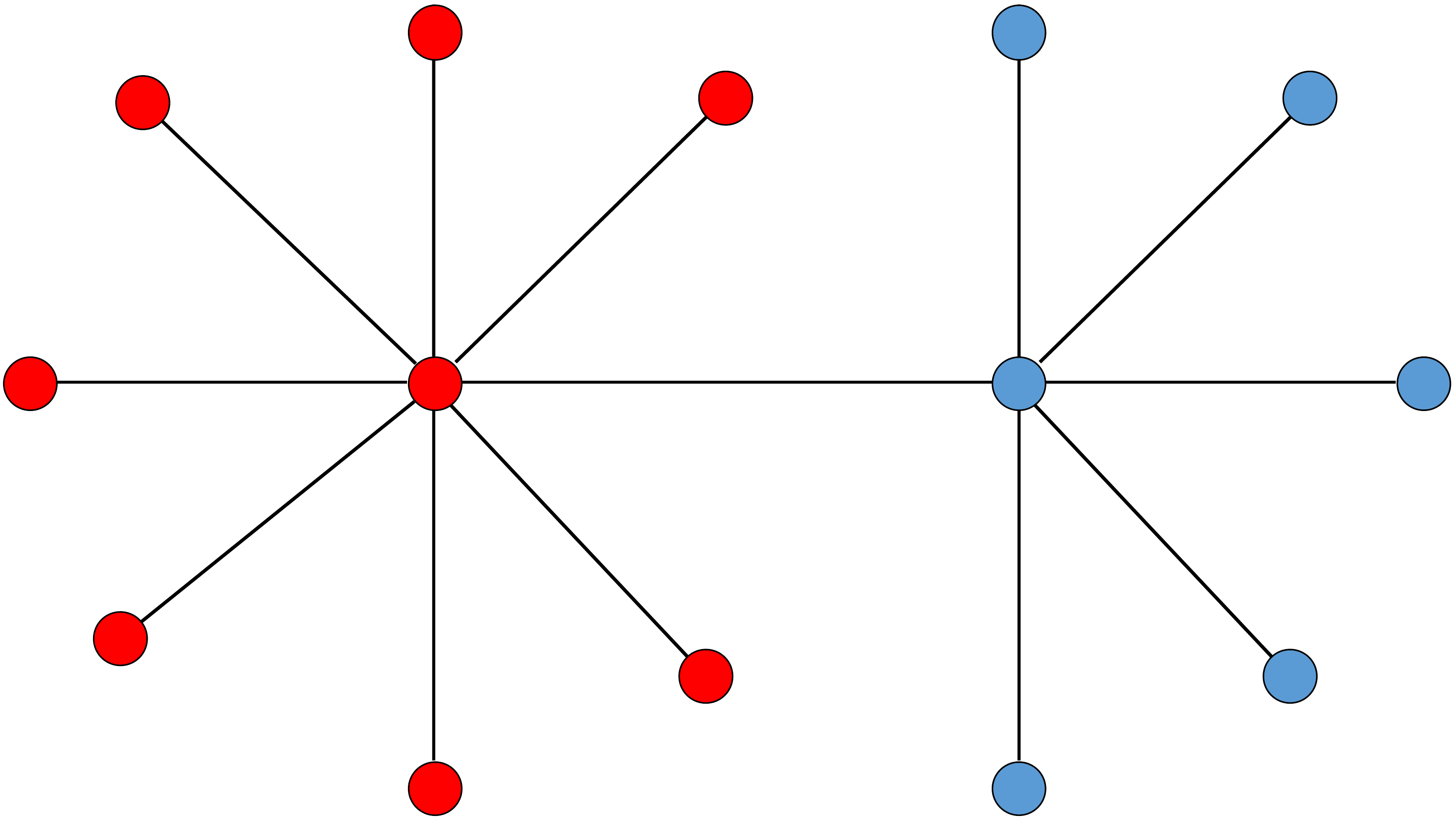}
	\caption{Network $G_2$: nodes with the same label are clustered (depicting \textit{Homophily})}
	\label{subfig:NEP_large_variance}
\end{subfigure}\hfill 
\begin{subfigure}[!h]{0.3\textwidth}
	\centering
	\includegraphics[width=2.3in]{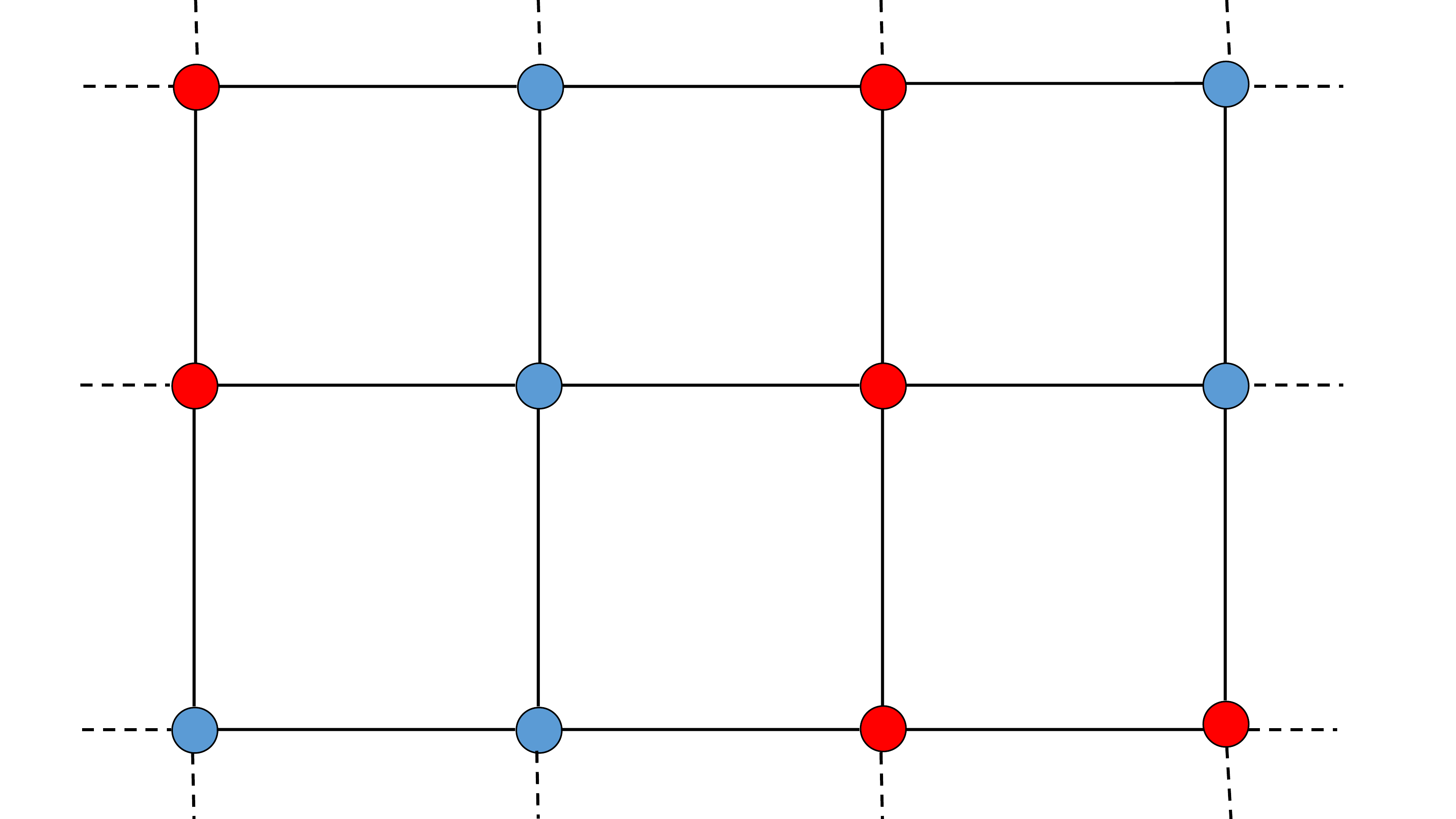}
	\caption{Network $G_3$: a large regular graph with uniformly at random assigned labels}
	\label{subfig:NEP_unbiased}
\end{subfigure}
\caption{Consider the case of uniformly sampling nodes and obtaining responses $q(s)$ of sampled nodes $s\in S$ about the fraction of red (i.e. label 1) nodes in the network. In graph $G_1$ of Fig. \ref{subfig:NEP_larger_bias}, most nodes have their only neighbor to be of color red even though most of the nodes in the network are of color blue. Hence, NEP with uniformly sampled nodes would result in a highly biased estimate in this case. In graph $G_2$ of Fig.~\ref{subfig:NEP_large_variance}, approximately half the nodes have only a red neighbor and, rest of the nodes have only a blue neighbor. Hence, NEP with uniformly sampled nodes would result in an estimate with a large variance in this case. In graph $G_3$ of Fig.~\ref{subfig:NEP_unbiased}, average of the NEP responses $q(v)$ of nodes is approximately equal to the fraction $\truevalue$ of nodes with red labels. Further, $q(v)$ does not vary largely among nodes. Hence, uniformly sampling nodes for NEP in this case would result in an accurate estimate.  Similar examples can also be found in~\cite{dasgupta2012}. This figure highlights the importance of exploiting network structure and node label distribution when sampling nodes to be used for NEP.}
\label{fig:examples_error_NEP}
\end{figure*}

\bigskip
\noindent
{\bf Why (not) use NEP?} NEP is substantially different  to classical intent polling where, each sampled individual is asked \textit{``What is your label?"}. In intent polling, the response of each sampled individual $s\in S$ is her label $f(s)$. In contrast, in NEP, the response $q(s)$ of each sampled individual $s\in S$ is a function of her neighborhood (defined by the underlying graph $G$) as well as the labels of her neighbors. Therefore, depending on  the graph $G$, function $f$ and the method of obtaining the samples $S$, NEP might produce either,
	\begin{enumerate}[I.]
		\item an estimate with a larger MSE compared to intent polling (e.g. networks in Fig. \ref{subfig:NEP_larger_bias} and Fig. \ref{subfig:NEP_large_variance} shows when uniform sampling of individuals for NEP might not work), or,
		
		\item an estimate with a smaller MSE compared to intent polling (e.g. network in Fig. \ref{subfig:NEP_unbiased} shows when uniform sampling of individuals for NEP might work)
	\end{enumerate}
These two possible outcomes highlight the importance of using the available information about the graph $G$ and the function $f$, when selecting the set $S$ of individuals in NEP. This lead us to the main results of this paper where we combine NEP with \textit{friendship paradox} (reviewed in Sec.~\ref{sec:fp}) based sampling methods to obtain statistically efficient estimates.

\begin{remark}
	\normalfont
The assumption that the graph is not fully known (case~1 and case~2 in problem definition) is applicable to most contexts that deal with large scale real world networks (including online social networks such as Facebook). This is mostly due to the fact that structures of social networks are not made available publicly by online social network network administrators and accurately estimating the network structure would incur costs (computation, memory, querying cost, etc. ) that are not feasible in the context of polling. In contrast, our methods do not rely on estimating the network structure and instead, rely on \textit{friendship paradox} based sampling method.
\end{remark}

\begin{remark}
	\normalfont
	If the graph $G = (V, E)$ is fully known, a greedy (deterministic) optimization method (similar to the one in~\cite{kempe2003}) can be used to solve the NP hard problem of finding the set $S \subset V$ of $\samplingbudget$ individuals whose collective neighborhood is largest, with a $(1- 1/e)$ approximation guarantee. However, the largest collective neighborhood does not ensure that the set $S$ of individuals would provide an accurate NEP estimate of the fraction $\truevalue$ defined in (\ref{eq:true_value}) e.g. if the sampling budget $\samplingbudget = 1$, the node with the largest collective neighborhood in the graph $G_2$ in Fig. \ref{subfig:NEP_large_variance} is the red color node with degree seven, whose NEP response (fraction of red neighbors) is $q(s) = 1$, even though $\truevalue = 4/7$. Hence, our focus is on randomized sampling methods for NEP that do not require the graph to be known. 
\end{remark}

\subsection{Main Results and Organization}
The main results of this paper are NEP algorithms for the two cases described in the problem definition and their analysis. The algorithms utilize properties related to the structure of the network to find ${\samplingbudget}$ number of samples. The analysis provides simple and intuitive conditions under which, the proposed algorithms will provide a better estimate compared to intent polling. These results can be summarized as follows.
\begin{compactitem}
	\item For case~1 and case~2, estimation algorithms are obtained by combining NEP with recent statistical results related to a phenomenon called \textit{friendship paradox}~\cite{feld1991}. Analytical results characterizing the dependence of bias, variance and MSE of estimates on the properties of the graph $G$, labels $f(v)$ of individuals $v\in V$ are obtained. These results help to identify conditions on the graph and the labels for which, friendship paradox based NEP produces a better estimate compared to intent polling and naive NEP with uniformly sampled individuals.
	
	\vspace{0.25cm}
	\item Empirical and simulation results on five real world social network datasets and synthetic datasets are provided, illustrating the performance of the proposed algorithms compared to classical methods.  These empirical and simulation results yield useful insights that complement the analytical results. 
\end{compactitem} 

\vspace{0.25cm}
\noindent
{\bf Organization:} Sec.~\ref{sec:fp} presents a review of the key results related to friendship paradox. Sec.~\ref{sec:NEP_with_FP} presents the two NEP algorithms based on the friendship paradox for case~1 and case~2, followed by their theoretical analysis in Sec.~\ref{sec:analysis}. Sec.~\ref{sec:experiments} evaluates the proposed algorithms on empirical and synthetic datasets to illustrate and compare their performances. Finally, Sec. \ref{sec:discussion} provides a discussion about the two algorithms, their theoretical and experimental evaluations and how they relate to each other. 

\vspace{0.25cm}
\noindent
{\bf Notation:} Table~\ref{tbl:notation} summarizes the parameters and variables used frequently throughout the paper. 
\begin{table}[htbp]\caption{Summary of Notation}
	\begin{center}
		\begin{tabular}{r c p{6cm} }
			\toprule
			\multicolumn{3}{c}{\underline{Network Parameters}}\\
			\multicolumn{3}{c}{}\\
			$G = (V, E)$ & $\triangleq$ & Undirected graph with set of nodes~$V$ and set of edges~$E$ \\
			$A$ & $\triangleq$ & Symmetric adjacency matrix of the graph $G$ where
			$$A(u,v)= 
			\begin{cases}
			1,& \text{if } (u,v) \in E\\
				0,              & \text{otherwise}
			\end{cases}$$\\
			$n$ & $\triangleq$ & Number of nodes~i.e.~$n = \vert V\vert$ \\
			$M$ & $\triangleq$ & Number of friends~i.e.~$M = 2\vert E \vert$\\
			$\mathcal{N}(v)$ & $\triangleq$ & The set of neighbors of a node $v\in V$ as defined by the graph $G$\\
			$d(v)$ & $\triangleq$ & Degree of node $v\in V$ i.e. $d(v) = |\mathcal{N}(v)|$\\
			$f(v)$ & $\triangleq$ & Binary label of node $v\in V$ \\
			$\truevalue$ & $\triangleq$ & Fraction of nodes with label $1$ i.e. $$\truevalue  = \frac{ \vert\{v\in V: f(v) = 1\}\vert  }{\vert V\vert} $$\\
			$q(v)$ &  $\triangleq$ & NEP response of node $v \in V$ i.e.$$q(v) = \frac{\vert\{u \in \mathcal{N}(v): f(u) = 1 \}\vert}{\vert \mathcal{N}(v)\vert}$$ \\
			$D$ & $\triangleq$ & Diagonal matrix with $D(v,v) = d(v)$\\		
			$\mathcal{A}$ & $\triangleq$ & Normalized adjacency matrix ${\mathcal{A} = D^{-\frac{1}{2}}AD^{-\frac{1}{2}}}$ \\
			\multicolumn{3}{c}{}\\
			\multicolumn{3}{c}{\underline{Random Variables, Distributions and Related Parameters}}\\
			\multicolumn{3}{c}{}\\
			$X$ & $\triangleq$ & Uniformly sampled node from set of nodes~$V$\\
		$Y$ & $\triangleq$ & Random friend: uniform sampled end of a uniformly sampled edge from~$E$\\
		$Z$ & $\triangleq$ & Random friend of a random node\\ 
		$P(k)$ & $\triangleq$ & Degree distribution which gives the probability that a random node $X$ has degree $k$ \\
		$q(k)$ & $\triangleq$ & Neighbor degree distribution that gives the probability that a random friend $Y$ has degree $k$ \\
		$e(k,k')$ & $\triangleq$ & Joint degree distribution that gives the probability that a random edge $(U,Y)$ will have nodes with degrees $d(U) = k, d(Y) = k'$ \\
		$\sigma_{k}$ & $\triangleq$ & Standard deviation of the degree $d(X)$ of a random node $X$ i.e. standard deviation of the degree distribution\\ 
		$\sigma_{f}$ & $\triangleq$ & Standard deviation of the label $f(X)$ of a random node $X$ \\
		$r_{kk}$ & $\triangleq$ & Neighbor degree correlation coefficient defined in~(\ref{eq:deg_deg_corr})\\ 
		$\rho_{kf}$ & $\triangleq$ & Degree-label correlation coefficient defined in~(\ref{eq:deg_label_corr})\\ 
				
		\multicolumn{3}{c}{}\\
		\multicolumn{3}{c}{\underline{Polling Estimates and Related Parameters}}\\
		\multicolumn{3}{c}{}\\
			${S}$ & $\triangleq$ & Set of the individuals queried by the pollster\\
			${\samplingbudget}$ & $\triangleq$ & Sampling budget~(number of individuals queried by the pollster)\\
			$N$ & $\triangleq$ & Length of Random Walk (for Algorithm~\ref{alg:RW_Sampling})\\
			$\estIP$ & $\triangleq$ & Intent polling estimate defined in~(\ref{eq:intent_polling_estimate})\\
			$\estUN$ & $\triangleq$ & Naive NEP estimate with uniformly sampled nodes defined in~(\ref{eq:naive_NEP})\\
			$\estRW$ & $\triangleq$ & NEP estimate obtained via proposed Algorithm~\ref{alg:RW_Sampling} \\
			$\estFN$ & $\triangleq$ & NEP estimate obtained via proposed Algorithm~\ref{alg:Friend_of_Node_Sampling} \\
			\bottomrule
		\end{tabular}
	\end{center}
	\label{tbl:notation}
\end{table}

\subsection{Related work}
\label{subsec:Related Work}
As described above,  in the classical intent polling\footnote{This method is called intent polling because, in the case of predicting the outcome of an election, this is equivalent to asking the voting intention of sampled individuals i.e. asking ``Who are you going to vote for in the upcoming election?'') \cite{rothschild2011}.}, a set~$S$ of nodes is obtained by uniform sampling with replacement and then, the average of their labels
\begin{equation}
\label{eq:intent_polling_estimate}
\estIP = \frac{\sum_{u\in S}f(u)}{\samplingbudget},
\end{equation} is used as the estimate (called intent polling estimate henceforth) of the fraction $\truevalue$ defined in~(\ref{eq:true_value}).  The main limitation of intent polling is that the sample size needed to achieve an $\epsilon$-~additive error is $O(\frac{1}{\epsilon^2})$~\cite{dasgupta2012}. Our work is motivated by two recently proposed methods, namely ``expectation polling" \cite{rothschild2011} and ``social sampling" \cite{dasgupta2012}, that attempt to overcome this limitation in intent polling. 

Firstly, in expectation polling \cite{rothschild2011}, each sampled individual provides an estimate of the label held by the majority of the individuals in the network (i.e.~sampled individuals answer the question ``Who do you think will win the election?"). Then, each sampled individual will look at her neighbors and provide the value held by the majority of them. This method is more efficient (in terms of sample size) compared to the intent polling method since each sampled individual now provides the putative response of a neighborhood\footnote{Intent polling and expectation polling have been considered intensively in literature, mostly in the context of forecasting elections and, it is generally accepted that expectation polling is more efficient compared to intent polling \cite{graefe2015accuracy, murr2015wisdom, graefe2014accuracy, murr2011wisdom, manski2004measuring}.}\footnote{\cite{krishnamurthy2016, krishnamurthy2014online} discuss how expectation polling can give rise to misinformation propagation in social learning and, propose Bayesian filtering methods to eliminate the misinformation propagation.}. Secondly, in social sampling\cite{dasgupta2012}, the response of each sampled individual is a function of the labels, degrees and the sampling probabilities of her neighbors. \cite{dasgupta2012} provides several unbiased estimators for the fraction $\bar{f}$ using this method and, establishes bounds for their variances.  The main limitation of social sampling method (compared to NEP) is that it requires the sampled individuals to know a significant amount of information about the underlying network. Therefore, a practical implementation of social sampling might not be feasible in settings with limited information about a very large graph. Hence, NEP can be thought of a as a method which asks a question that seeks a finer resolution compared to expectation polling and yet, simpler and intuitive compared to social sampling. 

The key idea utilized in our proposed NEP estimators for case~1 and case~2 (stated in problem definition) is the friendship paradox (detailed in Sec.~\ref{sec:fp}), which is a form of network sampling bias observed in undirected graphs. Friendship paradox has recently gained attention in several applications related to networks under the broad theme ``how network biases can be used effectively for estimation problems?". For example, \cite{nettasinghe2019maximum, eom2015tail} show how friendship paradox can be utilized for accurate estimation of a heavy tailed degree distribution, \cite{garcia2014using, christakis2010} show how friendship paradox can be used for quickly detecting a disease outbreak. Our results for the case~1 and case~2 also fall under this broad theme. 
Apart from the applications in estimation problems, friendship paradox has also been explored in the contexts of perception biases in social networks~\cite{alipourfard2019friendship, jackson2019friendship,lerman2016}, information diffusion and opinion formation~\cite{nettasinghe2019diffusions,krishnamurthy2019information,lee2019impact,bagrow2017friends}, influence maximization and stochastic seeding~\cite{chin2018evaluating, kumar2018network, lattanzi2015}, node properties other than the degrees~\cite{higham2018centrality,eom2014,momeni2016qualities} and directed social networks~\cite{alipourfard2019friendship,higham2018centrality,hodas2013}.

\section{What is Friendship Paradox?}
\label{sec:fp}
``Friendship paradox" is a graph theoretic consequence first presented in \cite{feld1991} by Scott L. Feld in 1991. The friendship paradox states, ``\textit{on average, the number of friends of a random friend is always greater than the number of friends of a random individual}". Formally:
\begin{theorem} { (Friendship Paradox \cite{feld1991})}
	\label{th:friendship_paradox_Feld}
	Consider an undirected graphs~${G = (V,E)}$. Let $X$ be a node chosen uniformly from $V$ and, $Y$~be a uniformly chosen node from a uniformly chosen edge $e\in E$. Then,
	\begin{equation}
		\mathbb{E} \{d(Y)\} \geq \mathbb{E}\{d(X)\},
	\end{equation} where, $d(X)$ and $d(Y)$ denote the degrees of $X$ and $Y$, respectively. 
\end{theorem}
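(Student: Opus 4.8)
The plan is to reduce the inequality to the nonnegativity of a variance by first identifying the sampling laws of $X$ and $Y$ explicitly. First I would compute the law of the uniformly chosen node $X$: since $\mathbb{P}(X=v)=1/n$ for each $v\in V$, the mean degree is $\mathbb{E}\{d(X)\}=\frac{1}{n}\sum_{v\in V}d(v)=M/n$, using the handshake identity $\sum_{v}d(v)=2|E|=M$.

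Next I would work out the law of the random friend $Y$. Because $Y$ is obtained by drawing an edge uniformly from $E$ and then one of its two endpoints uniformly, the probability that $Y$ lands on a fixed node $v$ equals the number of edges incident to $v$ (namely $d(v)$), each contributing weight $\frac{1}{|E|}\cdot\frac{1}{2}$. Hence $\mathbb{P}(Y=v)=\frac{d(v)}{2|E|}=\frac{d(v)}{M}$; that is, $Y$ is the degree-size-biased counterpart of $X$. This yields the key identity
\begin{equation*}
\mathbb{E}\{d(Y)\}=\sum_{v\in V}d(v)\,\frac{d(v)}{M}=\frac{1}{M}\sum_{v\in V}d(v)^2=\frac{\mathbb{E}\{d(X)^2\}}{\mathbb{E}\{d(X)\}},
\end{equation*}
where the last step divides numerator and denominator by $n$.

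With this identity in hand, the conclusion is immediate. Writing $\mathbb{E}\{d(X)^2\}=\var(d(X))+(\mathbb{E}\{d(X)\})^2$ and substituting gives
\begin{equation*}
\mathbb{E}\{d(Y)\}=\mathbb{E}\{d(X)\}+\frac{\var(d(X))}{\mathbb{E}\{d(X)\}}\geq\mathbb{E}\{d(X)\},
\end{equation*}
since $\var(d(X))\geq 0$ and $\mathbb{E}\{d(X)\}>0$ for any graph with at least one edge. Equivalently, one may obtain the same bound by applying the Cauchy--Schwarz inequality in the form $\left(\sum_{v}d(v)\right)^2\leq n\sum_{v}d(v)^2$.

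The main obstacle, such as it is, lies not in the inequality but in correctly deriving the degree-biased law $\mathbb{P}(Y=v)=d(v)/M$: one must carefully account for the two endpoints per edge and verify that these probabilities sum to one over $V$. Once the size-biasing is established, everything reduces to a one-line consequence of the nonnegativity of variance, and the inequality is an equality precisely when every node has the same degree (a regular graph).
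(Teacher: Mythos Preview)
Your proof is correct and is in fact the classical argument: identify the law of $Y$ as the degree-size-biased version of the uniform node $X$, so that $\mathbb{E}\{d(Y)\}=\mathbb{E}\{d(X)^2\}/\mathbb{E}\{d(X)\}$, and then invoke $\var(d(X))\geq 0$. Each step is sound, including the derivation of $\mathbb{P}(Y=v)=d(v)/M$ and the equality case for regular graphs.

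It is worth noting, however, that the paper does not actually supply a proof of this theorem. It is stated as a cited result (Feld, 1991), followed only by an informal heuristic---that high-degree individuals appear as friends of many others and thus inflate the average degree of a random friend---and by a remark that the inequality can be deduced from a monotone likelihood ratio ordering between $d(Y)$ and $d(X)$. Your argument is strictly more than what the paper offers: it is a complete, self-contained proof, and the variance identity you derive,
\[
\mathbb{E}\{d(Y)\}-\mathbb{E}\{d(X)\}=\frac{\var(d(X))}{\mathbb{E}\{d(X)\}},
\]
makes the paper's intuition quantitatively precise. The likelihood-ratio route the paper alludes to would instead observe that $\mathbb{P}(d(Y)=k)/\mathbb{P}(d(X)=k)\propto k$ is increasing in $k$, which implies first-order stochastic dominance and hence the mean inequality; your direct computation is more elementary and yields the exact gap.
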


In Theorem~\ref{th:friendship_paradox_Feld}, the random variable~$Y$ is called a random friend~(or a random neighbor) since it is obtained by sampling a pair of friends (i.e. an edge from the graph) uniformly and then choosing one of them by an unbiased coin flip. The intuition behind Theorem \ref{th:friendship_paradox_Feld} is as follows. Individuals with large numbers of friends appear as the friends of a large number of individuals. Hence, such popular individuals can contribute to an increase in the average number of friends of friends. On the other hand, individuals with smaller numbers of friends appear as friends of a smaller number of individuals. Hence, they cannot cause a significant change in the average number of friends of friends. Further, \cite{cao2016} shows that the original version of the friendship paradox~(Theorem~\ref{th:friendship_paradox_Feld}) is a consequence of the monotone likelihood ratio ordering between random variables $d(Y )$ and $d(X)$.

\vspace{0.25cm}
\noindent
{\bf Refinements of Friendship Paradox.} Recall that friendship paradox, in its original version given in Theorem \ref{th:friendship_paradox_Feld}, is a comparison between the degrees of a random individual~$X$ and a random friend~$Y$ (obtained by sampling an edge uniformly and then choosing one end of it by an unbiased coin flip). However, a more intuitive comparison would be the comparison of degree $d(X)$ of a random individual $X$ and the degree $d(Z)$ of a random friend $Z$ of a random individual. \cite{cao2016} develops the following important refinement of the friendship paradox which achieves this.
\begin{theorem} \cite{cao2016}
	\label{th:fosd_X_Z}
		Let ${G = (V,E)}$ be an undirected graph, $X$ be a node chosen uniformly from $V$ and, $Z$ be a uniformly chosen neighbor of a uniformly  chosen node from $V$. Then, 
		\begin{equation}
		\label{eq:fosd}
		d(Z)\geq_{fosd} d(X)
		\end{equation} 
		where, $\geq_{fosd}$ denotes the first order stochastic dominance\footnote{\label{fn:fosd}A random variable $X$ (with a cumulative distribution function~$F_X$) first order stochastically dominates a discrete random variable $Y$ (with a cumulative distribution function $F_Y$), denoted $X \geq_{fosd} Y$, if, $F_X(n) \leq F_Y(n)$, for all $n$.}. 
\end{theorem}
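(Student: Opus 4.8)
The plan is to work directly with the tail distributions and reduce the first order stochastic dominance claim to a combinatorial inequality over the edges of $G$. First I would compute the law of $Z$: writing $W$ for the uniformly chosen node and $Z$ for the uniformly chosen element of $\mathcal{N}(W)$ (assuming, as the construction of $Z$ requires, that there are no isolated nodes so that $d(w)\geq 1$), one has
$$\mathbb{P}(Z = z) = \sum_{w \in \mathcal{N}(z)} \frac{1}{n}\cdot \frac{1}{d(w)} = \frac{1}{n}\sum_{w \in \mathcal{N}(z)}\frac{1}{d(w)},$$
whereas $\mathbb{P}(X = z) = 1/n$. Thus the likelihood ratio of $Z$ against $X$ at a node $z$ is $h(z) := \sum_{w \in \mathcal{N}(z)} 1/d(w)$. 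It is tempting to deduce the dominance from a likelihood ratio (or MLR) ordering, as in the discussion following Theorem~\ref{th:friendship_paradox_Feld}, but this route fails: $h(z)$ depends on the degrees of the \emph{neighbors} of $z$ rather than on $d(z)$, and a high degree node all of whose neighbors have even larger degree can have $h(z) < 1$. Hence no pointwise ordering is available and the argument must be made in aggregate.

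Using the definition of $\geq_{fosd}$ from the footnote, $d(Z) \geq_{fosd} d(X)$ is equivalent to $\mathbb{P}(d(Z) \geq k) \geq \mathbb{P}(d(X) \geq k)$ for every threshold $k$. Writing $H := \{v \in V : d(v) \geq k\}$ for the corresponding upper level set, and dividing out the common factor $1/n$, this is exactly the inequality $\sum_{z \in H} h(z) \geq |H|$. The key step is to express both sides as sums over the \emph{same} index set of ordered incident pairs $(z,w)$ with $z \in H$ and $w \in \mathcal{N}(z)$: the left side is $\sum_{z\in H}\sum_{w \in \mathcal{N}(z)} 1/d(w)$ by definition of $h$, while the right side is rewritten as
$$|H| = \sum_{z \in H} \frac{d(z)}{d(z)} = \sum_{z\in H}\sum_{w\in \mathcal{N}(z)}\frac{1}{d(z)},$$
since $z$ has exactly $d(z)$ neighbors. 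Subtracting, the target inequality becomes
$$\sum_{z \in H}\sum_{w \in \mathcal{N}(z)}\left(\frac{1}{d(w)} - \frac{1}{d(z)}\right) \geq 0.$$

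Finally I would evaluate this sum edge by edge. Every edge $\{z,w\}$ with both endpoints in $H$ is counted twice, once as $(z,w)$ contributing $1/d(w) - 1/d(z)$ and once as $(w,z)$ contributing $1/d(z) - 1/d(w)$, so these terms cancel exactly. The surviving terms come from edges with $z \in H$ and $w \notin H$; for such an edge $d(z) \geq k > d(w)$, hence $1/d(w) - 1/d(z) > 0$. The whole sum is therefore a sum of nonnegative terms, which gives $\sum_{z\in H} h(z) \geq |H|$ and hence the stochastic dominance. The only real obstacle is conceptual rather than computational: recognizing that the naive likelihood ratio comparison does not apply pointwise, and then finding the symmetrization over ordered incident pairs that makes the intra-$H$ edges cancel so that only the favorably signed cross edges remain. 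Once the difference is written in this form the sign analysis is immediate.
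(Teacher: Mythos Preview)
Your argument is correct. The computation of the law of $Z$, the reduction of the tail inequality to $\sum_{z\in H} h(z)\ge |H|$, the rewriting of $|H|$ as $\sum_{z\in H}\sum_{w\in\mathcal{N}(z)}1/d(z)$, and the edge-by-edge cancellation are all valid; the only structural assumption you invoke---no isolated vertices---is already implicit in the definition of $Z$.

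Note, however, that the paper does not supply its own proof of this theorem: it is quoted from \cite{cao2016} and used as a black box. So there is no ``paper's proof'' to compare against here. Your edge-symmetrization argument is a clean self-contained proof, and your observation that the monotone likelihood ratio route (which the paper mentions for the $Y$ variant in Theorem~\ref{th:friendship_paradox_Feld}) does \emph{not} carry over to $Z$ is both correct and worth stating, since it explains why a different idea is needed for this version.
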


An immediate consequence of  Theorem \ref{th:fosd_X_Z} is,
\begin{equation}
\mathbb{E}\{d(Z)\} \geq \mathbb{E}\{d(X)\},
\end{equation}
which says that a random neighbor of a random individual has more friends than a random individual, on average (from the fact that first order stochastic dominance implies larger mean).  

With the above background, we present NEP algorithms that are based on Theorem \ref{th:friendship_paradox_Feld} and Theorem \ref{th:fosd_X_Z}.

\section{NEP Algorithms Based on Friendship Paradox}
\label{sec:NEP_with_FP}
In this section, we consider randomized methods for selecting individuals for NEP based on the concept of friendship paradox explained in Sec.~\ref{sec:fp}.

For notational reference, we first describe a naive NEP method that does not exploit the friendship paradox. 

\vspace{0.25cm}
\noindent
{\bf Naive NEP Algorithm:}
	\begin{itemize}
		\item[\it Step~1: ] Obtain a set ~$S$ of uniformly sampled nodes from $V$ and the NEP response $q(s)$~(defined in~(\ref{eq:response_to_query})) from each~$s \in S$.
		
		\item[\it Step~2: ]  Compute the naive NEP estimate of~$\truevalue$ in~(\ref{eq:true_value}) as,
		\begin{equation}
		\label{eq:naive_NEP}
		\estUN =  \frac{\sum_{s\in S} q(s)}{{\samplingbudget}} 
		\end{equation}		
	\end{itemize}

Note from the step~1  of the naive NEP method that the naive NEP estimate $\estUN$ of $\truevalue$~(fraction of nodes with label~$1$) is based on the NEP responses of uniformly sampled nodes i.e.~answers of uniformly sampled individuals to the question~``\textit{What is your estimate of the fraction of people with label 1?}" . Hence, the naive NEP algorithm exploits one's knowledge about her neighbors but does not exploit friendship paradox based sampling.  Our main contribution below is to develop NEP algorithms that exploit friendship paradox based sampling~(Sec.~\ref{subsec:RW_method} and Sec.~\ref{subsec:FN_method}) and show that they are more accurate compared to the naive NEP estimate~(\ref{eq:naive_NEP}) in terms of mean-squared error under various network structures.

\subsection{Case 1 - Sampling Friends using Random Walks}
\label{subsec:RW_method}

This subsection considers the case where the graph ${G = (V,E)}$ is not known initially, but sequential exploration of the graph is possible using multiple random walks (case~1 of problem definition) over the nodes of the graph. 

\vspace{0.25cm}
{\noindent {\bf A motivating example} for case 1 is a massive online social network where the fraction of user profiles with a certain characteristic needs to be estimated (e.g. profiles with more than ten posts about a product). Web-crawling (using random walks) approaches are widely used to obtain samples from such  massive online social networks without requiring the global knowledge of the full network graph \cite{leskovec2006,gjoka2010,ribeiro2010estimating, gjoka2011practical,mislove2007measurement}. 
	
\begin{algorithm}
	\caption{NEP with Random Walk Based Sampling}
	\label{alg:RW_Sampling}
	\DontPrintSemicolon 
	\KwIn{${\samplingbudget}$ number of samples $\{v_1, v_2,\dots, v_{\samplingbudget}\} \subset V$.}
	\KwOut{$\estRW$ which is the estimate of the fraction $\truevalue$ of nodes with label $1$.}
	
	\vspace{0.3cm}
	
	\begin{enumerate}
		\item Initialize ${\samplingbudget}$ independent random walks on the social network starting from $v_1, v_2,\dots, v_{\samplingbudget}$.

		\item Run each random walk for a $N$ steps. Then collect sample $S = \{s_1,\dots, s_{\samplingbudget}\}$ where, $s_i \in V$ is collected from $i^{th}$ random walk.

		\item Query each $s \in S$ to obtain NEP response $q(s)$ (defined in~(\ref{eq:response_to_query})) and, compute the estimate
		\begin{equation}
		\estRW =  \frac{\sum_{s \in S} q(s)}{{\samplingbudget}}. \nonumber
		\end{equation}
		
	\end{enumerate}
	
\end{algorithm}
We propose Algorithm \ref{alg:RW_Sampling} for estimating the fraction $\truevalue$ in case 1. The intuition behind Algorithm \ref{alg:RW_Sampling} stems from the fact that the stationary distribution of a random walk on an undirected graph (which is connected and non-bipartite) is the uniform distribution over the set of neighbors~\cite{aldous2002reversible}. Therefore, Algorithm \ref{alg:RW_Sampling} obtains a set $S$ of ${\samplingbudget}$ neighbors independently from the graph $G = (V,E)$ for sufficiently large~$N$ (i.e. one sample from each of the $\samplingbudget$ independent random walks) in the step~2. Then, the response $q(s)$ of each sampled individual $s\in S$ for the NEP query is used to compute the estimate $\estRW$ in step~3. According to the friendship paradox (Theorem \ref{th:friendship_paradox_Feld}), NEP with random neighbors is equivalent to using more node labels (than NEP with random nodes) due to the fact that random neighbors have more neighbors than random nodes on average. Hence, it is intuitive that the variance of this method should be smaller compared to the naive NEP (with uniformly sampled nodes) and intent polling method. In Sec. \ref{sec:analysis}, we verify this claim theoretically and, explore the properties of the underlying network for the estimate $\estRW$ to have a smaller MSE compared to the intent polling method.

\subsection{Case 2 - Sampling a Random Friend of a Random Individual}
\label{subsec:FN_method}

In case 1 (Sec.\ \ref{subsec:RW_method}), we assumed that it is possible to crawl the unknown graph using random walks. Instead, in case 2, we assume that a set of uniform samples $S = \{s_1,\dots, s_{\samplingbudget}\}$ from the set of nodes $V$ can be obtained and, each sampled individual $s_i \in S$ has the ability to answer the question \textit{"What is your (random) friend's estimate of the fraction of individuals with label 1?"}. 

\vspace{0.25cm}
{\noindent {\bf A motivating example} for case 2 is the situation where random individuals are requested to answer survey questions for an incentive. In such cases, the pollster usually does not have any information about the structural connectivity of the queried individuals and, will only be able to obtain their answer for a question. 
	
	For this case, we propose Algorithm \ref{alg:Friend_of_Node_Sampling} to obtain an estimate of the fraction $\truevalue$ of individuals with label $1$.
\begin{algorithm}
	\caption{NEP with Random Friend Sampling}
	\label{alg:Friend_of_Node_Sampling}
	\DontPrintSemicolon 
	\KwIn{${\samplingbudget}$ number of uniform samples $S = \{s_1, s_2,\dots, s_{\samplingbudget}\} \subset V$.}
	\KwOut{$\estFN$ which is the estimate of the fraction $\truevalue$ of nodes with label $1$.} 
	
	\vspace{0.3cm}
	
	\begin{enumerate}
		\item Ask each $s_i \in S$ to provide $q(u_i)$ (defined in (\ref{eq:response_to_query})) for some randomly chosen neighbor $u_i \in \mathcal{N}(s_i)$.

		\item Compute the estimate,
		\begin{equation}
		\estFN =  \frac{\sum_{i = 1}^{\samplingbudget} q(u_i)}{{\samplingbudget}}. \nonumber
		\end{equation}
		
	\end{enumerate}
	
\end{algorithm}

In Algorithm \ref{alg:Friend_of_Node_Sampling}, each uniformly sampled individual~${s_i \in S}$ answers the question \textit{``What is your (random) friend's estimate of the fraction of individuals with label 1?"} by providing $q(u_i)$ for a randomly chosen neighbor $u_i \in \mathcal{N}(s_i)$. The reasoning behind this method stems from Theorem~\ref{th:fosd_X_Z} which states that, a random friend of a randomly chosen individual has more friends than a randomly chosen individual on average\footnote{This does not follow from the original version of friendship paradox (Theorem \ref{th:friendship_paradox_Feld}) since the random friend is not a uniformly chosen neighbor from the set of all $2\vert E \vert$ neighbors. Instead, the response is obtained from a random neighbor of a uniformly sampled node. }. Therefore, this method should result in a smaller variance compared to naive NEP~(\ref{eq:naive_NEP}) and intent polling~(\ref{eq:intent_polling_estimate}).  

\begin{remark}
	\normalfont
One can think of Algorithm~\ref{alg:Friend_of_Node_Sampling} as a special case of Algorithm~\ref{alg:RW_Sampling} with the random walk length set to $N=1$. By the same argument, the naive NEP algorithm then corresponds to a random walk with length $N=0$ for the purpose of comparing the three NEP algorithms. The length of the random walk is used in Sec.~\ref{sec:discussion} to discuss how friendship paradox based NEP methods achieve a bias-variance trade-off. We refer the interested readers to~\cite{kramer2016multistep} which also explores the friendship paradox using random walk length. 
\end{remark}

\section{Statistical Analysis of the Estimates Obtained via Algorithm \ref{alg:RW_Sampling} and Algorithm \ref{alg:Friend_of_Node_Sampling}}
\label{sec:analysis}

Algorithm~\ref{alg:RW_Sampling} and Algorithm~\ref{alg:Friend_of_Node_Sampling} presented in Sec.\ \ref{sec:NEP_with_FP} query random friends (denoted by $Y$ in Theorem~\ref{th:friendship_paradox_Feld}) and random friends of random nodes (denoted by $Z$ in Theorem~\ref{th:fosd_X_Z})  respectively, exploiting the friendship paradox. In this context, the aim of this section is to analyze the bias, variance and the mean-squared error (MSE)\footnote{The mean-squared error~(MSE) of estimate $T$ of a parameter $\truevalue$ is
	\begin{align}
	\mse\{T\} &= \mathbb{E}\{(T - \bar{f})^2\}= \bias\{T\}^2 + \var\{T\} \label{eq:MSE}.
	\end{align}} of the estimates obtained using these proposed algorithms to show that they outperform alternative methods (intent polling and naive NEP without friendship paradox). More specifically,
\begin{compactenum}
\item Theorem~\ref{th:NEP_with_vs_without_FP} motivates the use of friendship paradox based NEP algorithms (compared to the naive NEP with uniformly sampled nodes) by considering the case where the label of each node is assigned by an independent and identically distributed coin toss.

\item Theorem~\ref{th:bias_var_Trw} relates bias and variance of the estimate $\estRW$ obtained using Algorithm \ref{alg:RW_Sampling} to network properties such as degree label correlation and absence of \mbox{bottlenecks}. Then, Corollary~\ref{cor:b_range} gives sufficient conditions on the sampling budget $\samplingbudget$ for which the Algorithm~\ref{alg:RW_Sampling} has a smaller MSE compared to intent polling.

\item Theorem~\ref{th:bias_var_Tun} characterizes the bias and variance of the naive NEP (with uniformly sampled nodes and hence, not exploiting friendship paradox) and, Corollary~\ref{cor:var_upperbounds_comparisn_Tun_Trw} compares the worst case performance of friendship paradox based NEP (Algorithm~\ref{alg:RW_Sampling}) with naive NEP to highlight how friendship paradox based sampling results in a reduced variance. 
\item Theorem~\ref{th:bias_var_Tfn} characterizes the bias and variance of the estimate $\estFN$ obtained using Algorithm~\ref{alg:Friend_of_Node_Sampling} and relates them to properties of the underlying network. 
 \end{compactenum}

\subsection{Independent and Identically Distributed Labels}
Consider graph $G = (V, E)$ where each node $v\in V$ has a binary label $f(v) \in \{0,1\}$ that is a Bernoulli random variable which is independent of and identically distributed to other labels. The following result shows how friendship paradox based sampling~(Algorithm~\ref{alg:RW_Sampling} and Algorithm~\ref{alg:Friend_of_Node_Sampling}) results in reduced variance NEP estimates.

\begin{theorem}
	\label{th:NEP_with_vs_without_FP}
	Let the set of labels $\{f(v): v \in V \}$ be independent and identically distributed~(iid) Bernoulli random variables. Then, 
	\begin{align}
		\mse\{\estFN\} &\leq \mse\{\estUN\}\\
		\mse\{\estRW\} &\leq \mse\{\estUN\}
	\end{align}
	where, $\mse$ denotes mean square error defined in (\ref{eq:MSE}), \newline $\estUN$ is the naive NEP estimate~(\ref{eq:naive_NEP}), 
	\newline $\estRW$ is the estimate obtained using Algorithm~\ref{alg:RW_Sampling}, 
	\newline $\estFN$ is the estimate obtained using Algorithm~\ref{alg:Friend_of_Node_Sampling}.
\end{theorem}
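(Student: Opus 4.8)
The plan is to show that, under iid labels, each of the three estimators is unbiased in the joint model (over both the random labels and the random sampling), so that every MSE equals the variance of the corresponding error, and then to reduce all three variances to a single scalar functional of the sampling law that the friendship paradox controls. Write $p=\Pr\{f(v)=1\}$. Since $q(v)$ is an average of the iid Bernoulli labels of $\mathcal N(v)$, we have $\mathbb E_f\{q(v)\}=p$ for every $v$ and $\mathbb E\{\truevalue\}=p$; hence $\mathbb E\{\estUN\}=\mathbb E\{\estRW\}=\mathbb E\{\estFN\}=p=\mathbb E\{\truevalue\}$, so each estimator is jointly unbiased and its MSE is the variance of its error.

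The key step is a linear-in-labels reformulation. Letting $W_1,\dots,W_{\samplingbudget}$ denote the sampled nodes (distributed as $X$ for $\estUN$, as $Y$ for $\estRW$, and as $Z$ for $\estFN$), any of the three estimators can be written as $T=\sum_{u\in V}c_u f(u)$ with sampling-dependent weights
\begin{equation}
c_u=\frac{1}{\samplingbudget}\sum_{j=1}^{\samplingbudget}\frac{\mathds 1\{u\in\mathcal N(W_j)\}}{d(W_j)},\nonumber
\end{equation}
and $\truevalue=\sum_u f(u)/n$. Two facts make the algebra collapse: $\sum_u c_u=1$ deterministically (the weights of each response $q(W_j)$ sum to one), and, by the iid assumption, $\mathbb E\{f(u)f(u')\}=p^2$ for $u\ne u'$ while $\mathbb E\{f(u)^2\}=p$. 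Taking $\mathbb E_f$ first, the coefficient multiplying $p^2$ is $(\sum_u(c_u-1/n))^2=0$, leaving
\begin{equation}
\mse\{T\}=p(1-p)\Big(\mathbb E\big\{\textstyle\sum_{u\in V}c_u^2\big\}-\tfrac1n\Big).\nonumber
\end{equation}
Thus comparing the three MSEs reduces entirely to comparing the scalar $\mathbb E\{\sum_u c_u^2\}$ across the sampling laws $X$, $Y$, and $Z$.

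I would then split this functional into a diagonal and an off-diagonal part,
\begin{equation}
\mathbb E\Big\{\textstyle\sum_{u}c_u^2\Big\}=\frac{1}{\samplingbudget}\,\mathbb E\Big\{\frac{1}{d(W)}\Big\}+\frac{\samplingbudget-1}{\samplingbudget}\sum_{u}h_W(u)^2,\nonumber
\end{equation}
where $h_W(u)=\mathbb E_W\{\mathds 1\{u\in\mathcal N(W)\}/d(W)\}$ is the probability that a uniformly chosen neighbor of $W$ equals $u$. Because this is a convex combination, it suffices to prove each piece is no larger for $W=Y$ and $W=Z$ than for $W=X$. The diagonal term is exactly where the friendship paradox enters: for $\estRW$, $\mathbb E\{1/d(Y)\}=\sum_k \frac{kP(k)}{\mathbb E\{d(X)\}}\frac1k=1/\mathbb E\{d(X)\}\le\mathbb E\{1/d(X)\}$ by Jensen's inequality; for $\estFN$, $d(Z)\ge_{fosd}d(X)$ from Theorem~\ref{th:fosd_X_Z} together with the monotonicity of $k\mapsto 1/k$ gives $\mathbb E\{1/d(Z)\}\le\mathbb E\{1/d(X)\}$.

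The main obstacle is the off-diagonal collision term $\sum_u h_W(u)^2$, which arises precisely because all $\samplingbudget$ responses share the same random label assignment and so are correlated through their common neighbors; indeed $\sum_u h_W(u)^2=\mathbb E\{|\mathcal N(W_1)\cap\mathcal N(W_2)|/(d(W_1)d(W_2))\}$ for two independent copies $W_1,W_2$. Here a first-order stochastic dominance on a single degree no longer suffices, since the quantity is a pairwise, common-neighbor functional. I would control it by identifying $h_Y(u)=d(u)/M$ (the degree-biased law) and $h_X(u)=\Pr\{Z=u\}$, and then showing the collision probabilities satisfy $\|h_Y\|_2^2\le\|h_X\|_2^2$ and $\|h_Z\|_2^2\le\|h_X\|_2^2$ --- i.e. that friendship-paradox sampling spreads the induced neighbor distribution no less than uniform sampling does. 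Establishing these two collision-probability inequalities (e.g.\ via a majorization / Schur-convexity argument on the induced neighbor distributions) is the crux of the proof; once both the diagonal and off-diagonal comparisons are in hand, the convex-combination bound yields $\mse\{\estRW\}\le\mse\{\estUN\}$ and $\mse\{\estFN\}\le\mse\{\estUN\}$ for every sampling budget $\samplingbudget$.
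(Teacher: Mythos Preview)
Your route diverges from the paper's in a substantive way. The paper's proof treats the sampled responses $q(W_1),\dots,q(W_{\samplingbudget})$ as independent over the joint randomness in labels and sampling, writes $\var\{T\}=\tfrac{1}{\samplingbudget}\var\{q(W)\}$, and then computes $\var\{q(W)\}=\sigma_f^2\,\mathbb E\{1/d(W)\}$ via the law of total variance conditioning on $d(W)$. The entire comparison thus collapses to $\mathbb E\{1/d(Y)\}\le\mathbb E\{1/d(X)\}$ and $\mathbb E\{1/d(Z)\}\le\mathbb E\{1/d(X)\}$, which are exactly your ``diagonal'' inequalities and follow from Theorems~\ref{th:friendship_paradox_Feld} and~\ref{th:fosd_X_Z}. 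In the paper's argument the off-diagonal collision term simply never appears.

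You are correct that under the literal model --- one random label assignment shared by all $\samplingbudget$ samples --- the $q(W_i)$ are \emph{not} independent and your collision term $\tfrac{\samplingbudget-1}{\samplingbudget}\|h_W\|_2^2$ genuinely enters the MSE. But this is where your proposal has a real gap: the inequalities $\|h_Y\|_2^2\le\|h_X\|_2^2$ and $\|h_Z\|_2^2\le\|h_X\|_2^2$ are left unproved, and the hoped-for ``majorization/Schur-convexity'' argument does not go through in general. Take two cliques $K_a$ and $K_b$ with $a\gg b$ joined by a single edge: then $h_X$ (the law of $Z$) is essentially uniform on the $a+b$ vertices, while $h_Y$ (the degree-biased law) concentrates on the large clique, and a direct calculation gives $\|h_Y\|_2^2>\|h_X\|_2^2$. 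On such graphs your own formula yields $\mse\{\estRW\}>\mse\{\estUN\}$ once $\samplingbudget$ is large enough, so the step you flagged as ``the crux'' cannot be closed as stated. To recover the paper's conclusion you should adopt its simpler premise (treat the $\samplingbudget$ responses as independent, which is exact for $\samplingbudget=1$ and is what the paper does); then only the diagonal comparison survives and the friendship-paradox/first-order-stochastic-dominance step finishes the argument.
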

\begin{proof}
	By definition,
	\begin{align}
	\mathbb{E}\{\estRW\} &=\mathbb{E}\{q(Y)\} = \mathbb{E}\bigg\{\frac{\sum_{u\in \mathcal{N}(Y)}f(u)}{d(Y)}\bigg\} \nonumber\\
	\mathbb{E}\{\estFN\} &=\mathbb{E}\{q(Z)\} = \mathbb{E}\bigg\{\frac{\sum_{u\in \mathcal{N}(Z)}f(u)}{d(Z)}\bigg\}\nonumber\\
	\mathbb{E}\{\estUN\} &=\mathbb{E}\{q(X)\} = \mathbb{E}\bigg\{\frac{\sum_{u\in \mathcal{N}(X)}f(u)}{d(X)}\bigg\}.\nonumber
	\end{align}
	Consider 	$\mathbb{E}\{\estRW\}$.
	\begin{align}
	\mathbb{E}\{\estRW\} &= \mathbb{E}\bigg\{\frac{\sum_{u\in \mathcal{N}(Y)}f(u)}{d(Y)}\bigg\} \nonumber \\
	&=	\mathbb{E}\bigg\{	\mathbb{E}\bigg\{\frac{\sum_{i = 1}^{k}L_i}{k}	\bigg\vert d(Y) = k\bigg\}	\bigg\} \nonumber
	\end{align}
	where, $L_i, \; i = 1,\dots, k$ are the iid labels of the neighbors of $Y$. Since the labels $L_i$ are iid, the inner expectation becomes~$\mathbb{E}\{f(X)\}$. Therefore, 	
	\begin{align}
	\mathbb{E}\{\estRW\} = \mathbb{E}\{f(X)\} = \truevalue. \nonumber
	\end{align} 
	Following similar arguments, we also get,
	\begin{align}
	\mathbb{E}\{\estFN\} =\mathbb{E}\{\estUN\}=\mathbb{E}\{f(X)\} = \truevalue. \nonumber
	\end{align} 
	Therefore, the estimates are unbiased when the labels are iid. 
	
	Next, consider the variances of the estimate $\estRW$. Since all ${\samplingbudget}$ samples are independent,
	\begin{align}
	\var\{\estRW\} &= \frac{1}{{\samplingbudget}}\var\bigg\{\frac{\sum_{u\in \mathcal{N}(Y)}f(u)}{d(Y)} \bigg\}\nonumber
	\end{align}
	By applying the law of total variance, we get,
	\begin{align}
	\var\{\estRW\} &= \frac{1}{{\samplingbudget}}\bigg[	\var\bigg\{	\mathbb{E}\bigg\{\frac{\sum_{u\in \mathcal{N}(Y)}f(u)}{d(Y)}	\bigg\vert d(Y)\bigg\}	\bigg\}\,\,	+	\nonumber\\
	&\hspace{1cm}\mathbb{E}\bigg\{	\var\bigg\{\frac{\sum_{u\in \mathcal{N}(Y)}f(u)}{d(Y)}	\bigg\vert d(Y)\bigg\}	\bigg\} \bigg]\nonumber\\
	&=\frac{\sigma_f^2}{{\samplingbudget}}	\mathbb{E}\bigg\{\frac{1}{d(Y)}\bigg\}\nonumber \, \text{(since the labels are iid)}\nonumber
	\end{align}
	where, $\sigma_f^2$ denotes the variance of iid labels~i.e.~${\sigma_f^2 = \var\{f(X)\}}$. 
	Following similar steps, we obtain,
	\begin{align*}
	\var\{\estFN\} =\frac{\sigma_f^2}{{\samplingbudget}}	\mathbb{E}\bigg\{\frac{1}{d(Z)}\bigg\},\,\var\{\estUN\} =\frac{\sigma_f^2}{{\samplingbudget}}	\mathbb{E}\bigg\{\frac{1}{d(X)}\bigg\}.
	\end{align*}
	Then, the result follows by noting that 
	\begin{align}
	\frac{1}{d(X)} \geq_{fosd} \frac{1}{d(Y)}, \quad 
	\frac{1}{d(X)} \geq_{fosd} \frac{1}{d(Z)} \label{eq:inv_fosd_inv}
	\end{align} where, $\geq_{fosd}$ denotes the first order stochastic dominance defined in Footnote \ref{fn:fosd} in Sec. \ref{sec:fp}. Eq.~(\ref{eq:inv_fosd_inv}) follows immediately from Theorem \ref{th:friendship_paradox_Feld} and Theorem \ref{th:fosd_X_Z} (note that $d(\cdot)$ is strictly positive for connected graphs).
\end{proof}

Theorem \ref{th:NEP_with_vs_without_FP} shows that friendship paradox based NEP methods~(Algorithm~\ref{alg:RW_Sampling} and Algorithm~\ref{alg:Friend_of_Node_Sampling}) have smaller MSE compared to naive NEP~(\ref{eq:naive_NEP}) when the node labels are iid Bernoulli random variables. A natural question is ``How do friendship paradox based NEP methods perform when the node labels are assigned from an arbitrary joint distribution?''. We consider this next.

\subsection{Arbitrarily Assigned Node Labels}
In the remainder of this section, we assume that node labels $\{f(v): v\in V\}$ are already assigned from an arbitrary joint distribution or deterministically specified. 

We first characterize the bias $\bias\{\estRW\}$ and the variance $\var\{\estRW\}$ of the estimate $\estRW$ obtained via Algorithm \ref{alg:RW_Sampling} as the random walk length $N$ goes to infinity. Define the $|V| \times |V|$ dimensional diagonal matrix $D$ and the normalized adjacency matrix $\mathcal{A}$ as,
\begin{align}
D(v, v) = d(v), \quad \mathcal{A} = D^{-\frac{1}{2}}AD^{-\frac{1}{2}}. \label{eq:normalized_adj_matrix}
\end{align}
Let~$||Q||$ denote the spectral norm of a matrix~$Q$~(recall that the spectral norm is the maximum singular value). 
\begin{theorem}
	\label{th:bias_var_Trw} Let $G = (V, E)$ be a connected, non-bipartite graph. Then, as the random walk length $N$ tends to infinity, the bias~$\bias\{\estRW\}$ and the variance~$\var \{\estRW\}$ of the estimate~$\estRW$, obtained via Algorithm~\ref{alg:RW_Sampling} are given by,
	\begin{align}
	\begin{split} 	\label{eq:bias_T2}
	\bias(\estRW) &= \mathbb{E} \{f(Y)\} - \mathbb{E} \{f(X)\} \\
	&= \frac{\cov\{f(X),d(X)\}}{\mathbb{E}\{d(X)\}}
	\end{split}
	\end{align}
	\begin{align}
	\begin{split} 	\label{eq:var_ub_Trw}
	\hspace{-0.233cm}\var\{\estRW\} &= \frac{1}{\samplingbudget M}f^TD^{\frac{1}{2}}\bigg(   \mathcal{A}^2 - \frac{1}{M}D^{\frac{1}{2}}\mathds{1}\mathds{1}^TD^{\frac{1}{2}} \bigg)D^{\frac{1}{2}}f\\ 
	&\leq \frac{1}{{\samplingbudget}}\lambda_2^2 \mathbb{E}\{f(Y)\}
	\end{split}
	\end{align}
	where, $X$ is a random node, $Y$ is a random friend, $M$ is the total number of friends, $\lambda_2$ is the second largest singular value of the normalized adjacency matrix $\mathcal{A}$~(defined in~(\ref{eq:normalized_adj_matrix})) and $f$ is a column vector with label $f(v)\in \{0,1\}$ of node $v$ at $v^{th}$ element. 
\end{theorem}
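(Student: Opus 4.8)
The plan is to reduce the theorem to a computation of the first two moments of $q(Y)$, where $Y$ is the random friend of Theorem~\ref{th:friendship_paradox_Feld}. Since $G$ is connected and non-bipartite, the simple random walk has a unique stationary distribution $\pi(v) = d(v)/M$, which is precisely the law of $Y$. First I would argue that as $N \to \infty$ each of the $\samplingbudget$ independent walks returns a sample distributed as $Y$, so that $\estRW = \tfrac{1}{\samplingbudget}\sum_{i=1}^{\samplingbudget} q(s_i)$ is an average of $\samplingbudget$ i.i.d.\ copies of $q(Y)$. Everything then follows from $\mathbb{E}\{\estRW\}=\mathbb{E}\{q(Y)\}$ and $\var\{\estRW\}=\tfrac{1}{\samplingbudget}\var\{q(Y)\}$.

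For the bias, I would compute $\mathbb{E}\{q(Y)\} = \sum_v \tfrac{d(v)}{M} q(v)$ and swap the summation over nodes and their neighbors: each node $u$ is counted once per neighbor, so $\sum_v d(v)q(v) = \sum_v \sum_{u\in\mathcal{N}(v)} f(u) = \sum_u d(u)f(u)$, giving $\mathbb{E}\{q(Y)\} = \tfrac{1}{M}\sum_u d(u)f(u) = \mathbb{E}\{f(Y)\}$. Writing $M = n\,\mathbb{E}\{d(X)\}$ and dividing numerator and denominator by $n$ turns this into $\mathbb{E}\{d(X)f(X)\}/\mathbb{E}\{d(X)\}$; subtracting $\mathbb{E}\{f(X)\}=\truevalue$ yields the covariance form $\cov\{f(X),d(X)\}/\mathbb{E}\{d(X)\}$ in~(\ref{eq:bias_T2}).

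For the variance I would pass to matrix notation using the key observation that the response vector is $q = D^{-1}Af$. Then $\mathbb{E}\{q(Y)^2\} = \tfrac{1}{M}\sum_v d(v)q(v)^2 = \tfrac{1}{M}q^T D q = \tfrac{1}{M} f^T A D^{-1} A f = \tfrac{1}{M} f^T D^{\frac12}\mathcal{A}^2 D^{\frac12} f$, where the last equality uses $\mathcal{A}=D^{-1/2}AD^{-1/2}$ so that $D^{\frac12}\mathcal{A}^2 D^{\frac12}=AD^{-1}A$. Likewise $(\mathbb{E}\{q(Y)\})^2 = \tfrac{1}{M^2}(f^T D\mathds{1})^2 = \tfrac{1}{M^2} f^T D^{\frac12}(D^{\frac12}\mathds{1}\mathds{1}^T D^{\frac12})D^{\frac12} f$. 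Subtracting these and dividing by $\samplingbudget$ produces the exact quadratic form in~(\ref{eq:var_ub_Trw}).

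The bound rests on the spectral structure of the symmetric matrix $\mathcal{A}$: for a connected non-bipartite graph its top eigenvalue is $1$ and simple, with unit eigenvector $\psi_1 = \tfrac{1}{\sqrt{M}}D^{\frac12}\mathds{1}$ (as $\|D^{\frac12}\mathds{1}\|^2 = M$), while every other eigenvalue is strictly below $1$ in absolute value. Hence $\psi_1\psi_1^T = \tfrac{1}{M}D^{\frac12}\mathds{1}\mathds{1}^T D^{\frac12}$ is exactly the matrix subtracted inside the parenthesis, and $\mathcal{A}^2 - \psi_1\psi_1^T = \sum_{i\geq 2}\lambda_i^2\psi_i\psi_i^T$ is positive semidefinite with operator norm $\lambda_2^2$. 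Bounding $f^T D^{\frac12}(\mathcal{A}^2-\psi_1\psi_1^T)D^{\frac12}f \leq \lambda_2^2\,\|D^{\frac12}f\|^2 = \lambda_2^2\, f^T D f$ and using $f(v)^2=f(v)$ for binary labels, so that $f^T D f = \sum_v d(v)f(v) = M\,\mathbb{E}\{f(Y)\}$, gives $\var\{\estRW\}\leq \tfrac{1}{\samplingbudget}\lambda_2^2\mathbb{E}\{f(Y)\}$. I expect the main obstacle to lie not in the algebra but in the two identification steps that make it go through: justifying that the $N\to\infty$ limit yields exactly i.i.d.\ draws of $q(Y)$ (requiring connectedness and non-bipartiteness for a unique stationary law), and recognizing that the rank-one term is precisely the Perron projector of $\mathcal{A}$, which is what collapses the subtraction into a clean sum over the non-principal eigenspaces.
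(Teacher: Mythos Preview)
Your proposal is correct and follows essentially the same route as the paper: identify the stationary random-walk sample with the random friend $Y$, write $q=D^{-1}Af$, compute the first two moments of $q(Y)$ in matrix form, and recognize $\tfrac{1}{M}D^{1/2}\mathds{1}\mathds{1}^TD^{1/2}$ as the Perron eigenprojector of $\mathcal{A}$ so that the remaining matrix has operator norm $\lambda_2^2$. The only cosmetic difference is that the paper bounds the quadratic form via Cauchy--Schwarz followed by the operator-norm inequality, whereas you invoke the Rayleigh-quotient bound for the symmetric PSD matrix $\mathcal{A}^2-\psi_1\psi_1^T$ directly; the content is the same.
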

\begin{proof}
If $G=(V,E)$ is a connected, non-bipartite graph, then the stationary distribution of a random walk on $G$ samples each ${v \in V}$ with a probability proportional to the degree $d(v)$ of $v$~(page 298, \cite{durrett2010_probability}). Equivalently, sampling from the stationary distribution of a random walk on a finite connected, non-bipartite graph is equivalent to sampling friendships $(U,Y) \in E$ uniformly. Therefore, 
\begin{align}
&\bias(\estRW) = \mathbb{E}\{\estRW\} - \truevalue = \mathbb{E}\{q(U)\} - \truevalue\nonumber\\
&\hspace{0.5cm}= \mathbb{E}\{f(Y)\} - \mathbb{E}\{f(X)\}\nonumber\\
&\hspace{0.5cm}= \sum_{v \in V}f(v)\frac{d(v)}{\sum_{v \in V}d(v)} - \frac{\sum_{v \in V}f(v)}{\vert V \vert}\nonumber\\
&\hspace{0.5cm} = \frac{\mathbb{E}\{f(X)d(X)\} - \mathbb{E}\{f(X)\}\mathbb{E}\{d(X)\} \nonumber }{\mathbb{E}\{d(X)\}}\\
&\hspace{0.5cm} = \frac{\cov\{f(X),d(X)\}}{\mathbb{E}\{d(X)\}}\nonumber
\end{align}

To obtain the variance of $q(Y)$, let $e_v$ denote the $n\times1$ dimensional unit vector with $1$ at the $v^{th}$ element and zeros elsewhere. Then, $q(v) = e_v^T	D^{-1}Af$. Hence, 
\begin{align}
\mathbb{E}\{q(Y)\} &= \sum_{v \in V} \frac{d(v)}{M}e_v^T	D^{-1}Af = \frac{1}{M}\mathds{1}^TD	D^{-1}Af \nonumber\\
&=\frac{1}{M}\mathds{1}^TAf = \frac{1}{M}\mathds{1}^TDf
\end{align}
\begin{align}
\mathbb{E}\{q^2(Y)\} &= \sum_{v \in V} \frac{d(v)}{M} f^TAD^{-1}e_v e_v^T	D^{-1}Af  \nonumber\\
&=\frac{1}{M}f^TAD^{-1}Af.
\end{align}
Therefore,
\begin{align}
\var\{q(Y)\} &= \mathbb{E}\{q^2(Y)\} - \mathbb{E}\{q(Y)\}^2\nonumber\\
&\hspace{-1.5cm}= \frac{1}{M}f^TAD^{-1}Af - \frac{1}{M^2}f^TD\mathds{1}\mathds{1}^TDf\nonumber\\
&\hspace{-1.5cm}=\frac{1}{M}f^TD^{\frac{1}{2}} \bigg(\Big(D^{-\frac{1}{2}}AD^{-\frac{1}{2}}\Big)^2 - \Big(\frac{D^{\frac{1}{2}}\mathds{1}}{\sqrt{M}}\Big)\Big(\frac{\mathds{1}^TD^{\frac{1}{2}} }{\sqrt{M}}\Big)\bigg)D^{\frac{1}{2}}f\nonumber\\
&\hspace{-1.5cm}=\frac{1}{M}f^TD^{\frac{1}{2}} \bigg(\mathcal{A}^2 - \Big(\frac{D^{\frac{1}{2}}\mathds{1}}{\sqrt{M}}\Big)\Big(\frac{\mathds{1}^TD^{\frac{1}{2}} }{\sqrt{M}}\Big)\bigg)D^{\frac{1}{2}}f,\nonumber
\end{align}
where~$\mathcal{A}$ denotes the normalized adjacency matrix defined in~(\ref{eq:normalized_adj_matrix}). Note that $\frac{D^{\frac{1}{2}}\mathds{1}}{\sqrt{M}}$ is the eigenvector corresponding to the largest eigenvalue $1$ of $\mathcal{A}^2$. Therefore, we get
\begin{align}
& \bigg|\frac{1}{M}f^TD^{\frac{1}{2}}\bigg(\mathcal{A}^2 - \Big(\frac{D^{\frac{1}{2}}\mathds{1}}{\sqrt{M}}\Big)\Big(\frac{\mathds{1}^TD^{\frac{1}{2}} }{\sqrt{M}}\Big)\bigg)D^{\frac{1}{2}}f\bigg| \nonumber\\
&\hspace{0.5cm}\leq \bigg|\bigg|\frac{D^{\frac{1}{2}}f}{\sqrt{M}}\bigg|\bigg|\times\bigg|\bigg|\bigg(\mathcal{A}^2 - \Big(\frac{D^{\frac{1}{2}}\mathds{1}}{\sqrt{M}}\Big)\Big(\frac{\mathds{1}^TD^{\frac{1}{2}} }{\sqrt{M}}\Big)\bigg) \frac{D^{\frac{1}{2}}f}{\sqrt{M}} \bigg|\bigg|\nonumber\\
&\hspace{1cm} \text{(by Cauchy-Schwarz inequality)}\nonumber\\
&\hspace{0.5cm}\leq \bigg|\bigg|\bigg(\mathcal{A}^2 - \Big(\frac{D^{\frac{1}{2}}\mathds{1}}{\sqrt{M}}\Big)\Big(\frac{\mathds{1}^TD^{\frac{1}{2}} }{\sqrt{M}}\Big)\bigg) \bigg|\bigg|\times \bigg|\bigg|\frac{D^{\frac{1}{2}}f}{\sqrt{M}}\bigg|\bigg|^2 \nonumber\\
&\hspace{1cm} \text{(where, $||Q||$ denotes operator norm of a matrix $Q$)}\nonumber\\
&\hspace{0.5cm}= \lambda_2^2 \mathbb{E}\{f(Y)\}\nonumber
\end{align}
and~(\ref{eq:var_ub_Trw}) follows. 
\end{proof}

Theorem~\ref{th:bias_var_Trw} gives insight into the network properties that affect the performance of the Algorithm~\ref{alg:RW_Sampling}. Eq.~(\ref{eq:bias_T2}) states that, the bias of the estimate $\estRW$ is proportional to the covariance between the degree $d(X)$ and the label $f(X)$ of a random node $X$. Theorem~\ref{th:bias_var_Trw} also shows that the variance of the estimate $\estRW$ is bounded above by a function of the second largest singular value $\lambda_2$ of the normalized adjacency matrix $\mathcal{A}$ and the expected label value of a random friend $Y$. Hence, a smaller $\lambda_2$ which indicates that the network has a good expansion\footnote{A network is considered to have ``good expansion" if every subset $S$ of
	nodes ($S \leq 50\%$ of the nodes) has a neighborhood that
	is larger than some ``expansion factor'' multiplied by the
	number of nodes in~$S$. Hence, a good expansion factor indicates that that there are no bottlenecks i.e. there is no small set of edges whose removal will fragment the network into two large connected components~	\cite{estrada2006network}.} (i.e. absence of bottlenecks)~\cite{estrada2006network} will result in a smaller variance in the estimate $\estRW$. 

The following corollary gives a sufficient condition for the estimate $\estRW$ to be more statistically efficient (i.e.~smaller MSE) compared to the classical intent polling method. Recall that the sampling budget $\samplingbudget$ denotes the number of nodes queried by the pollster.
\begin{corollary}
	\label{cor:b_range}
	If the sampling budget $\samplingbudget$ satisfies
	\begin{align}
	\samplingbudget \leq \frac{\big(\var\{f(X)\} - \lambda_2^2\mathbb{E}\{f(Y)\}\big)\mathbb{E}\{d(X)\}^2}{\cov\{f(X)d(X)\}^2},
	\end{align} then the estimate $\estRW$ obtained from Algorithm~\ref{alg:RW_Sampling} has a smaller MSE compared to the intent polling estimate $\estIP$ in (\ref{eq:intent_polling_estimate}), i.e.~$\mse\{\estRW\} \leq \mse\{\estIP\}$.
\end{corollary}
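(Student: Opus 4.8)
The plan is to decompose both mean-squared errors via the identity $\mse\{T\} = \bias\{T\}^2 + \var\{T\}$ from~(\ref{eq:MSE}), bound $\mse\{\estRW\}$ from above using Theorem~\ref{th:bias_var_Trw}, and then reduce the target inequality $\mse\{\estRW\} \leq \mse\{\estIP\}$ to the stated bound on $\samplingbudget$ by elementary algebra. First I would compute $\mse\{\estIP\}$. Since the set $S$ in~(\ref{eq:intent_polling_estimate}) is obtained by uniform sampling with replacement, $\estIP$ is an average of $\samplingbudget$ independent copies of $f(X)$; hence it is unbiased for $\truevalue = \mathbb{E}\{f(X)\}$, so $\bias\{\estIP\} = 0$ and $\mse\{\estIP\} = \var\{\estIP\} = \frac{1}{\samplingbudget}\var\{f(X)\}$.

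Next I would assemble an upper bound on $\mse\{\estRW\}$. Substituting the bias expression~(\ref{eq:bias_T2}) and the variance upper bound~(\ref{eq:var_ub_Trw}) from Theorem~\ref{th:bias_var_Trw} into the decomposition gives
\[
\mse\{\estRW\} = \bias\{\estRW\}^2 + \var\{\estRW\} \leq \left(\frac{\cov\{f(X),d(X)\}}{\mathbb{E}\{d(X)\}}\right)^2 + \frac{\lambda_2^2\,\mathbb{E}\{f(Y)\}}{\samplingbudget}.
\]
It therefore suffices to force this right-hand side to be at most $\mse\{\estIP\} = \frac{1}{\samplingbudget}\var\{f(X)\}$, since chaining the two bounds yields $\mse\{\estRW\} \leq \mse\{\estIP\}$.

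The final step is a rearrangement: moving the squared-bias term to the right-hand side, multiplying through by $\samplingbudget$, and dividing by the (positive) squared bias $\big(\cov\{f(X),d(X)\}/\mathbb{E}\{d(X)\}\big)^2$ reproduces exactly the threshold on the sampling budget stated in the corollary. There is essentially no analytical obstacle here — the work was already done in Theorem~\ref{th:bias_var_Trw} — so the only points requiring care are bookkeeping ones. Specifically, I would note that because the argument relies on the \emph{upper} bound~(\ref{eq:var_ub_Trw}) for $\var\{\estRW\}$ rather than its exact value, the resulting condition is sufficient but not necessary, and that the rearrangement implicitly presumes $\var\{f(X)\} - \lambda_2^2\mathbb{E}\{f(Y)\} \geq 0$ (otherwise no positive budget satisfies the bound, and the claim is vacuous). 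I would also flag that the denominator in the corollary's display should read $\cov\{f(X),d(X)\}^2$, matching the covariance appearing in~(\ref{eq:bias_T2}).
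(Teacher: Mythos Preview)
Your proposal is correct and follows essentially the same approach as the paper: bound $\mse\{\estRW\}$ above via the bias--variance decomposition using Theorem~\ref{th:bias_var_Trw}, compute $\mse\{\estIP\} = \var\{f(X)\}/\samplingbudget$ from unbiasedness, and rearrange. Your additional remarks on sufficiency, the implicit sign assumption, and the covariance-notation typo are all apt and go slightly beyond what the paper spells out.
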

\begin{proof}
From (\ref{eq:bias_T2}) and (\ref{eq:var_ub_Trw}) we get, 
	\begin{align}
		\mse\{\estRW\} &= \bias\{\estRW\}^2 + \var\{\estRW\}	\nonumber\\
		&\hspace{-1cm}\leq \bigg(	\frac{\cov\{f(X),d(X)\}}{\mathbb{E}\{d(X)\}} \bigg)^2 + \frac{\lambda_2^2 \mathbb{E}\{f(Y)\}}{{\samplingbudget}}.\label{eq:mse_Trw_ub}\\
		&\hspace{-2cm}\text{Also,}	\nonumber\\
		\mse\{\estIP\} &=\var\{\estIP\} =\frac{\var\{f(X)\}}{\samplingbudget}. \label{eq:mse_IP}
	\end{align}
Hence, the result follows from (\ref{eq:mse_Trw_ub}) and (\ref{eq:mse_IP}).
\end{proof}
Corollary~\ref{cor:b_range} indicates that a smaller degree-label correlation and the absence of bottlenecks result in the estimate~$\estRW$ outperforming intent polling~(\ref{eq:intent_polling_estimate}) for a larger range of sampling budgets~${\samplingbudget}$. This is because smaller label-degree correlation and the absence of bottlenecks make the bias and variance of $\estRW$ smaller according to Theorem~\ref{th:bias_var_Trw} and therefore, make the MSE of $\estRW$ smaller.

Next, we characterize bias and variance of the naive NEP estimate~$\estUN$~(defined in (\ref{eq:naive_NEP})), thereby allowing us to compare it with friendship paradox based NEP methods (Algorithm~\ref{alg:RW_Sampling} and Algorithm~\ref{alg:Friend_of_Node_Sampling}). 
\begin{theorem}
	\label{th:bias_var_Tun}
	The bias $\bias\{\estUN\}$ and the variance $\var \{\estUN\}$ of the naive NEP estimate $\estUN$~(defined in~(\ref{eq:naive_NEP})) are given by,
	\begin{align}
	\label{eq:bias_Tfn_eq1}
	\bias(\estUN) &= \mathbb{E} \{f(Z)\} - \mathbb{E} \{f(X)\} \\
	\var\{\estUN\} &= \frac{1}{\samplingbudget n}f^TD^{\frac{1}{2}}   \mathcal{A}D^{-\frac{1}{2}}\bigg(I - \frac{\mathds{1}\mathds{1}^T}{n} \bigg)D^{-\frac{1}{2}}\mathcal{A}D^{\frac{1}{2}}f \nonumber\\
	&\leq \frac{1}{\samplingbudget}\frac{\mathbb{E}\{f(Y)\}\mathbb{E}\{d(X)\}}{d_{min}} \label{eq:var_ub_Tun}
	\end{align}
where, $n$ is the total number of nodes, $X$ is a random node, $Y$ is a random friend, $Z$ is a random friend of a random node,  $\mathcal{A}$ is the normalized adjacency matrix defined in~(\ref{eq:normalized_adj_matrix}) and $f$ is a column vector with label $f(v)\in \{0,1\}$ of node $v$ at $v^{th}$ element. 
\end{theorem}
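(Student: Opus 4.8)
The plan is to mirror the structure of the proof of Theorem~\ref{th:bias_var_Trw}, the only structural difference being that $\estUN$ averages the responses $q(X)$ of \emph{uniformly} sampled nodes rather than of random friends, so the relevant sampling measure places mass $1/n$ on each node instead of $d(v)/M$. For the bias, I would note that since the $\samplingbudget$ sampled nodes are iid uniform, $\mathbb{E}\{\estUN\} = \mathbb{E}\{q(X)\}$ with $X$ uniform on $V$. Writing $q(X)$ by iterated expectation (draw $X$ uniformly, then a uniform neighbor $U\in\mathcal{N}(X)$) identifies $\mathbb{E}\{q(X)\} = \mathbb{E}\{f(Z)\}$, where $Z$ is a random friend of a random node. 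Subtracting $\truevalue = \mathbb{E}\{f(X)\}$ then yields (\ref{eq:bias_Tfn_eq1}) directly.

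For the exact variance, independence of the samples gives $\var\{\estUN\} = \frac{1}{\samplingbudget}\var\{q(X)\}$, so it suffices to compute the single-sample quantity. Reusing the identity $q(v) = e_v^T D^{-1}Af$ and the fact that under uniform sampling $\sum_{v\in V} \tfrac{1}{n} e_v e_v^T = \tfrac{1}{n}I$, I would obtain
\begin{align}
\mathbb{E}\{q(X)\} = \frac{1}{n}\mathds{1}^T D^{-1}Af, \qquad \mathbb{E}\{q^2(X)\} = \frac{1}{n}f^T A D^{-2} A f. \nonumber
\end{align}
Substituting $A = D^{\frac{1}{2}}\mathcal{A}D^{\frac{1}{2}}$ turns the second moment into $\frac{1}{n}f^T D^{\frac{1}{2}}\mathcal{A}D^{-1}\mathcal{A}D^{\frac{1}{2}}f$ and turns $\mathbb{E}\{q(X)\}$ into $\frac{1}{n}\mathds{1}^T D^{-\frac{1}{2}}\mathcal{A}D^{\frac{1}{2}}f$. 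Forming $\var\{q(X)\} = \mathbb{E}\{q^2(X)\} - \mathbb{E}\{q(X)\}^2$, factoring $D^{-1} = D^{-\frac{1}{2}}D^{-\frac{1}{2}}$, and collecting $D^{\frac{1}{2}}f$ on the right and its transpose on the left assembles the cross term into the rank-one piece $\frac{\mathds{1}\mathds{1}^T}{n}$, producing the sandwiched-projection form stated in Theorem~\ref{th:bias_var_Tun}.

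For the upper bound (\ref{eq:var_ub_Tun}), I would drop the projection (equivalently use $\var\{q(X)\}\leq \mathbb{E}\{q^2(X)\} = \frac{1}{n}\sum_v q(v)^2$) and then exploit that each response satisfies $q(v)\in[0,1]$, so $q(v)^2\leq q(v)$. Exchanging the order of summation over the (undirected) edges gives
\begin{align}
\sum_{v} q(v)^2 \leq \sum_{v} q(v) = \sum_{u} f(u)\!\!\sum_{v\in\mathcal{N}(u)}\!\!\frac{1}{d(v)} \leq \frac{1}{d_{min}}\sum_{u} f(u)d(u), \nonumber
\end{align}
where the last step uses $\frac{1}{d(v)}\leq \frac{1}{d_{min}}$ and $|\mathcal{N}(u)| = d(u)$. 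Finally I would recognize $\frac{1}{n}\sum_u f(u)d(u) = \mathbb{E}\{f(Y)\}\mathbb{E}\{d(X)\}$, using $\mathbb{E}\{f(Y)\} = \frac{1}{M}\sum_u d(u)f(u)$ and $\mathbb{E}\{d(X)\} = M/n$, and divide by $\samplingbudget$ to reach (\ref{eq:var_ub_Tun}).

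The only delicate part is the matrix bookkeeping in the exact variance: one must track the $n$ versus $M$ normalizations and the $D^{\pm\frac{1}{2}}$ factors carefully so that the subtracted mean-squared term reassembles precisely into the projection $I - \frac{\mathds{1}\mathds{1}^T}{n}$ rather than a stray rank-one matrix with the wrong scaling. By contrast, the bias is a one-line conditional-expectation identity and the bound reduces to an elementary edge-counting inequality, so neither of those steps should present an obstacle.
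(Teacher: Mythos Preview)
Your derivation of the bias and of the exact variance expression follows the paper's proof essentially line for line: the same iterated-expectation identity $\mathbb{E}\{q(X)\}=\mathbb{E}\{f(Z)\}$ for the bias, and the same moment computation $\mathbb{E}\{q(X)\}=\frac{1}{n}\mathds{1}^TD^{-1}Af$, $\mathbb{E}\{q^2(X)\}=\frac{1}{n}f^TAD^{-2}Af$ followed by the substitution $A=D^{1/2}\mathcal{A}D^{1/2}$ to assemble the projection form.

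The upper bound, however, is obtained by a genuinely different route. The paper bounds the quadratic form $\frac{1}{n}(D^{1/2}f)^T B (D^{1/2}f)$ with $B=\mathcal{A}D^{-1/2}\big(I-\tfrac{\mathds{1}\mathds{1}^T}{n}\big)D^{-1/2}\mathcal{A}$ via Cauchy--Schwarz and submultiplicativity of the operator norm, using $\|\mathcal{A}\|=1$, $\|D^{-1/2}\|=d_{\min}^{-1/2}$, and $\|I-\tfrac{\mathds{1}\mathds{1}^T}{n}\|=1$ to get $\|B\|\leq 1/d_{\min}$, and then identifies $\|D^{1/2}f\|^2/n=\mathbb{E}\{f(Y)\}\mathbb{E}\{d(X)\}$. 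You instead discard the projection outright (i.e.\ bound $\var\{q(X)\}\leq\mathbb{E}\{q^2(X)\}$), invoke the pointwise inequality $q(v)^2\leq q(v)$, and finish with an edge-swap and the trivial bound $1/d(v)\leq 1/d_{\min}$. Both arguments are correct and land on the same bound. Your approach is more elementary and self-contained---it needs no spectral or norm machinery and makes the role of $d_{\min}$ transparent---whereas the paper's argument is uniform with its treatment of $\estRW$ in Theorem~\ref{th:bias_var_Trw} and would more readily yield sharper spectral refinements if one wished to retain the projection rather than drop it.
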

\begin{proof}
{Note that,}
\begin{align}
\mathbb{E}\{q(X)\} &= \mathbb{E}\bigg\{\frac{\sum_{u\in \mathcal{N}(X)}f(u)}{d(X)}\bigg\}\nonumber \\
&= \mathbb{E}\big\{\mathbb{E}\big\{ f(Z)\vert X\big\} \big\} = \mathbb{E}\{f(Z)\},\nonumber
\end{align}	
{from which, (\ref{eq:bias_Tfn_eq1}) follows.}

\noindent
Next, recall that $q(v) = e_v^T	D^{-1}Af$. 	Hence,
\begin{align}
\mathbb{E}\{q(X)\} &= \sum_{v \in V} \frac{1}{n}e_v^T	D^{-1}Af = \frac{1}{n}\mathds{1}^T	D^{-1}Af\;\text{and,} \nonumber\\
\mathbb{E}\{q^2(X)\} &= \sum_{v \in V} \frac{1}{n} f^TAD^{-1}e_v e_v^T	D^{-1}Af  \nonumber\\
&=\frac{1}{n}f^TAD^{-2}Af \nonumber
\end{align}
Therefore,
\begin{align}
\var\{q(X)\} &= \mathbb{E}\{q^2(X)\} - \mathbb{E}\{q(X)\}^2\nonumber\\
&\hspace{-1.5cm}= \frac{1}{n}f^TAD^{-2}Af - \frac{1}{n^2}f^TAD^{-1}\mathds{1}\mathds{1}^TD^{-1}Af\nonumber\\
&\hspace{-1.5cm}=\frac{1}{n}f^TD^{\frac{1}{2}} \bigg(\Big(D^{-\frac{1}{2}}AD^{-\frac{1}{2}}\Big)D^{-1}\Big(D^{-\frac{1}{2}}AD^{-\frac{1}{2}}\Big) \nonumber\\ 
&-\frac{1}{n}D^{-\frac{1}{2}}AD^{-1}\mathds{1}\mathds{1}^TD^{-1}AD^{-\frac{1}{2}}\bigg)D^{\frac{1}{2}}f\nonumber\\
&\hspace{-1.5cm}=\frac{1}{n}\big(f^TD^{\frac{1}{2}} \big)\big(\mathcal{A}D^{-\frac{1}{2}}\big) \bigg(I- \frac{\mathds{1}\mathds{1}^T}{n}\bigg)(D^{-\frac{1}{2}}\mathcal{A}\big) \big(D^{\frac{1}{2}}f\big)\nonumber
\end{align}
{By the sub-multiplicative property of matrix norms, }
\begin{align}
&\bigg|\bigg|			\big(\mathcal{A}D^{-\frac{1}{2}}\big) \bigg(I- \frac{\mathds{1}\mathds{1}^T}{n}\bigg)(D^{-\frac{1}{2}}\mathcal{A}\big)		\bigg|\bigg|\nonumber\\	
&\hspace{1.5cm} \leq ||\mathcal{A}||^2 ||D^{-\frac{1}{2}}||^2\bigg|\bigg| I- \frac{\mathds{1}\mathds{1}^T}{n} \bigg|\bigg|^2 = \frac{1}{d_{min}}	\label{eq:sub_multiplicative_matrix_norm}\\
&\hspace{0cm} \text{(where, $||Q||$ denotes operator norm of a matrix $Q$).}\nonumber
\end{align}
{Therefore, by applying Cauchy-Schwarz inequality and then using~(\ref{eq:sub_multiplicative_matrix_norm}), we get}
\begin{align}
\var\{q(X)\}  &\leq  \frac{||D^{\frac{1}{2}}f||^2}{n} \frac{1}{d_{min}}= \mathbb{E}\{f(Y)\} \frac{\mathbb{E}\{d(X)\}}{d_{min}}, \nonumber
\end{align}
{and (\ref{eq:var_ub_Tun}) follows.}
\end{proof}

The following corollary is a consequence of Theorem~\ref{th:bias_var_Trw} and Theorem~\ref{th:bias_var_Tun}. It compares the worst case performances of friendship paradox based NEP estimate $\estRW$~(obtained via Algorithm~\ref{alg:RW_Sampling}) and naive NEP estimate~$\estUN$~(defined in (\ref{eq:naive_NEP})). The result shows how friendship paradox based sampling reduces variance of NEP methods. 
\begin{corollary}
	\label{cor:var_upperbounds_comparisn_Tun_Trw}
	The upper bound (\ref{eq:var_ub_Trw}) for the variance of the estimate $\estRW$ (from Algorithm~\ref{alg:RW_Sampling}) and the upper bound (\ref{eq:var_ub_Tun}) for the variance of the estimate $\estUN$  (naive NEP) satisfy,
	\begin{align}
\frac{1}{{\samplingbudget}}\lambda_2^2 \mathbb{E}\{f(Y)\}\leq 	\frac{1}{\samplingbudget}\frac{\mathbb{E}\{f(Y)\}\mathbb{E}\{d(X)\}}{d_{min}}.
	\end{align}
\end{corollary}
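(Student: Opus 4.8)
The plan is to strip away the common factors shared by the two upper bounds and reduce the claim to a single scalar inequality, which then splits neatly into a spectral estimate and a degree estimate. First I would observe that both sides carry the identical positive prefactor $\frac{1}{\samplingbudget}\mathbb{E}\{f(Y)\}$; since $f(v)\in\{0,1\}$ we have $\mathbb{E}\{f(Y)\}\geq 0$, so cancelling this factor is legitimate and reduces the desired inequality to
\begin{align}
\lambda_2^2 \leq \frac{\mathbb{E}\{d(X)\}}{d_{min}}. \nonumber
\end{align}

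Next I would bound the two sides separately, each toward the constant $1$. For the left-hand side, recall that $\mathcal{A} = D^{-\frac{1}{2}}AD^{-\frac{1}{2}}$ is real symmetric, so its singular values are exactly the absolute values of its eigenvalues. For a connected graph all eigenvalues of $\mathcal{A}$ lie in $[-1,1]$, with the largest equal to $1$ and eigenvector $D^{\frac{1}{2}}\mathds{1}/\sqrt{M}$ (already identified in the proof of Theorem~\ref{th:bias_var_Trw}). Hence every singular value is at most $1$, and in particular the second largest singular value obeys $\lambda_2 \leq 1$, giving $\lambda_2^2 \leq 1$. For the right-hand side I would simply note that the expected degree is an average of the degrees, so $\mathbb{E}\{d(X)\} = \frac{1}{n}\sum_{v\in V}d(v)\geq d_{min}$, whence $\mathbb{E}\{d(X)\}/d_{min}\geq 1$.

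Chaining these two estimates yields $\lambda_2^2 \leq 1 \leq \mathbb{E}\{d(X)\}/d_{min}$, which is precisely the reduced inequality; reinstating the cancelled factor $\frac{1}{\samplingbudget}\mathbb{E}\{f(Y)\}$ recovers the stated comparison between the variance bounds of $\estRW$ and $\estUN$. The only step that needs any justification at all is the spectral bound $\lambda_2 \leq 1$, and even this is standard: $\mathcal{A}$ is symmetric and similar to the random-walk transition matrix $D^{-1}A$, whose spectral radius is $1$, so its entire spectrum sits in $[-1,1]$. I therefore anticipate no genuine obstacle here—the corollary amounts to the clean observation that a normalized spectral gap cannot exceed $1$ while an average-to-minimum degree ratio cannot fall below $1$, which is exactly why friendship-paradox sampling tightens the worst-case variance.
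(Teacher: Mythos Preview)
Your proposal is correct and follows essentially the same approach as the paper: both arguments reduce the inequality to the chain $\lambda_2^2 \leq 1 \leq \mathbb{E}\{d(X)\}/d_{min}$ after cancelling the common nonnegative factor $\frac{1}{\samplingbudget}\mathbb{E}\{f(Y)\}$. The paper states this chain in a single line (in fact writing the strict bound $\lambda_2^2 < 1$, valid under the connected non-bipartite assumption of Theorem~\ref{th:bias_var_Trw}), while you spell out the spectral and averaging justifications more explicitly.
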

\begin{proof}
	The proof follows by the fact that
	${0\leq \lambda_2^2  < 1 \leq \frac{\mathbb{E}\{d(X)\}}{d_{min}}}$.
\end{proof}

Finally, we characterize  bias and variance of the estimate~$\estFN$ obtained via Algorithm~\ref{alg:Friend_of_Node_Sampling} which exploits the second version of the friendship paradox (Theorem~\ref{th:fosd_X_Z}).
\begin{theorem}
	\label{th:bias_var_Tfn}
	The bias $\bias\{\estFN\}$ and the variance $\var \{\estFN\}$ of the estimate $\estFN$, obtained via Algorithm~\ref{alg:Friend_of_Node_Sampling} satisfy,
	\begin{align}
	\begin{split} \label{eq:bias_ub_Tfn}
	\bias\{\estFN\}^2 &= \frac{1}{n}\mathds{1}^TD^{-\frac{1}{2}}\big(\mathcal{A}^2	- I\big)D^{\frac{1}{2}}f\\
	&\leq (\lambda_n^2 -1)^2 \mathbb{E}\{f(Y)\}\frac{\mathbb{E}\{d(X)\}}{\bar{d}_{hm}} 
	\end{split}\\
	\var\{\estFN\} &= \frac{1}{\samplingbudget n}f^T  {A}D_{hm}^{-\frac{1}{2}}\bigg(D^{-1} - \frac{D_{hm}^{-\frac{1}{2}}\mathds{1}\mathds{1}^TD_{hm}^{-\frac{1}{2}}}{n} \bigg)D_{hm}^{-\frac{1}{2}}Af 	\label{eq:var_Tfn}
	\end{align}
	where, $\lambda_n$ is the smallest singular value of the normalized adjacency matrix $\mathcal{A}$, $\bar{d}_{hm} = {\mathbb{E} \big\{\frac{1}{d(X)}\big\}}^{-1}$ is the harmonic mean degree of the graph and $D_{hm}$ is a diagonal matrix with harmonic mean of the neighbor degrees of node $v\in V$ at the $v^{th}$ element.
\end{theorem}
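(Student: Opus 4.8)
The plan is to exploit that the $\samplingbudget$ quantities averaged in $\estFN$ are i.i.d.\ copies of $q(Z)$, where $Z$ is a random friend of a random node; hence $\bias\{\estFN\} = \mathbb{E}\{q(Z)\} - \truevalue$ and $\var\{\estFN\} = \frac{1}{\samplingbudget}\var\{q(Z)\}$, so everything reduces to the first two moments of $q(Z)$. The first step I would carry out is to pin down the sampling law of $Z$: as a uniform neighbor of a uniform node, $\mathbb{P}(Z=v) = \frac{1}{n}\sum_{u\in\mathcal{N}(v)}\frac{1}{d(u)}$. The key bookkeeping observation is that the harmonic-mean matrix satisfies $\frac{1}{D_{hm}(v,v)} = \frac{1}{d(v)}\sum_{u\in\mathcal{N}(v)}\frac{1}{d(u)}$, so $\mathbb{P}(Z=v) = \frac{d(v)}{n\,D_{hm}(v,v)}$ and the diagonal matrix collecting these probabilities is $\frac{1}{n}DD_{hm}^{-1}$. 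This is precisely what makes $D_{hm}$ surface in the final variance formula. Throughout I would reuse the identity $q(v) = e_v^T D^{-1}Af$ already established in the proofs of Theorem~\ref{th:bias_var_Trw} and Theorem~\ref{th:bias_var_Tun}.

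For the bias I would compute $\mathbb{E}\{q(Z)\} = \sum_v \mathbb{P}(Z=v)\,q(v)$ in matrix form. Writing the probability row vector as $\frac{1}{n}\mathds{1}^T D^{-1}A$ and $q = D^{-1}Af$ gives $\mathbb{E}\{q(Z)\} = \frac{1}{n}\mathds{1}^T D^{-1}AD^{-1}Af$, and the identity $AD^{-1}A = D^{\frac{1}{2}}\mathcal{A}^2 D^{\frac{1}{2}}$ turns this into $\frac{1}{n}\mathds{1}^T D^{-\frac{1}{2}}\mathcal{A}^2 D^{\frac{1}{2}}f$. Since $\truevalue = \frac{1}{n}\mathds{1}^T f = \frac{1}{n}\mathds{1}^T D^{-\frac{1}{2}}\,I\,D^{\frac{1}{2}}f$, subtracting yields the exact identity $\bias\{\estFN\} = \frac{1}{n}\mathds{1}^T D^{-\frac{1}{2}}(\mathcal{A}^2 - I)D^{\frac{1}{2}}f$ (the left-hand side of~(\ref{eq:bias_ub_Tfn}) is most naturally read as the bias itself, which is then squared to obtain the bound). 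To get the stated bound I would apply Cauchy--Schwarz to $\langle D^{-\frac{1}{2}}\mathds{1},\,(\mathcal{A}^2-I)D^{\frac{1}{2}}f\rangle$, using $\|D^{-\frac{1}{2}}\mathds{1}\|^2 = \mathds{1}^T D^{-1}\mathds{1} = n/\bar{d}_{hm}$ and $\|D^{\frac{1}{2}}f\|^2 = f^T Df = M\,\mathbb{E}\{f(Y)\}$ (the latter because $f$ is binary), together with $M = n\,\mathbb{E}\{d(X)\}$. The one nontrivial spectral fact needed is $\|\mathcal{A}^2 - I\| = 1 - \lambda_n^2$: since $\mathcal{A}$ is symmetric with singular values in $[\lambda_n,1]$, the eigenvalues of $\mathcal{A}^2 - I$ lie in $[\lambda_n^2 - 1,\,0]$, so its operator norm is $1-\lambda_n^2 = |\lambda_n^2-1|$. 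Combining these gives the $(\lambda_n^2-1)^2$ factor.

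For the variance I would compute both moments directly against the law of $Z$. Using $\mathbb{P}(Z=v)=\frac{d(v)}{nD_{hm}(v,v)}$ gives $\mathbb{E}\{q^2(Z)\} = \frac{1}{n}q^T DD_{hm}^{-1}q = \frac{1}{n}f^T AD_{hm}^{-1}D^{-1}Af$, while the same reweighting gives $\mathbb{E}\{q(Z)\} = \frac{1}{n}\mathds{1}^T D_{hm}^{-1}Af$, hence $\mathbb{E}\{q(Z)\}^2 = \frac{1}{n^2}f^T AD_{hm}^{-1}\mathds{1}\mathds{1}^T D_{hm}^{-1}Af$. Subtracting $\var\{q(Z)\}=\mathbb{E}\{q^2(Z)\}-\mathbb{E}\{q(Z)\}^2$ and factoring out a symmetric $D_{hm}^{-\frac{1}{2}}(\cdots)D_{hm}^{-\frac{1}{2}}$ — legitimate since $D$ and $D_{hm}$ are diagonal and commute — produces exactly the quadratic form in~(\ref{eq:var_Tfn}), after the division by $\samplingbudget$ for the $\samplingbudget$ independent samples.

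I expect the main obstacle to be the correct identification of the sampling law of $Z$ and, in particular, rewriting $\mathbb{P}(Z=v)$ through $D_{hm}$; once the factorization $\frac{1}{n}DD_{hm}^{-1}$ is in hand, the moment computations are routine matrix manipulations mirroring the proofs of Theorem~\ref{th:bias_var_Trw} and Theorem~\ref{th:bias_var_Tun}. The only other delicate point is the spectral estimate $\|\mathcal{A}^2 - I\| = 1-\lambda_n^2$, where one must use that $\lambda_n$ is the \emph{smallest} singular value of $\mathcal{A}$, so that the extreme eigenvalue of $\mathcal{A}^2 - I$ is the most negative one.
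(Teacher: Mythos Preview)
Your proposal is correct and follows essentially the same route as the paper: compute the law of $Z$ as $\mathbb{P}(Z=v)=\frac{1}{n}\sum_{u\in\mathcal{N}(v)}\frac{1}{d(u)}$, express $\mathbb{E}\{q(Z)\}$ and $\mathbb{E}\{q^2(Z)\}$ in matrix form (the latter producing $D_{hm}$), subtract to get the variance, rewrite the bias via $\mathcal{A}^2-I$, and bound it by Cauchy--Schwarz plus the operator-norm estimate $\|\mathcal{A}^2-I\|=1-\lambda_n^2$. Your exposition is in fact a bit more explicit than the paper's on two points---the identification $\mathbb{P}(Z=v)=d(v)/(n\,D_{hm}(v,v))$ and the justification of the spectral bound---but the underlying argument is the same.
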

\begin{proof}
Note that  $\mathbb{P}\{Z = v\} = \frac{1}{n}e_v^TAD^{-1}\mathds{1}$ and recall that $q(v) =  e_v^T	D^{-1}Af$. 	Hence,	
\begin{align}
\mathbb{E}\{q(Z)\} &= \sum_{v \in V}\mathbb{P}\{Z=v\} e_v^T	D^{-1}Af			\nonumber\\
&=   \sum_{v \in V}\frac{1}{n}(\mathds{1}^TD^{-1}A e_v	)(e_v^T	D^{-1}Af) 	\nonumber\\
&= \frac{1}{n}\mathds{1}^TD^{-1}A D^{-1}Af 		\label{eq:Exp_qz}
\end{align}
Following similar steps to the above, we get,
\begin{align}
\mathbb{E}\{q^2(Z)\} &= \sum_{v \in V}\mathbb{P}\{Z=v\}f^T	A D^{-1}e_ve_v^T	D^{-1}Af 	\nonumber\\
&= \frac{1}{n}f^TAD^{-1}\Big(		\sum_{v \in V}e_ve_v^TAD^{-1}\mathds{1}e_v^T		\Big)D^{-1}Af \nonumber\\
&= \frac{1}{n}f^TAD_{hm}^{-1}{D^{-1}}Af \quad \text{where,} \label{eq:Exp_qz_squared}
\end{align}
$D_{hm}$ is a diagonal matrix with harmonic mean of the neighbors of node $v\in V$ at $v^{th}$ diagonal element i.e.~${D_{hm}(v,v) = d(v)\Big(\sum_{u\in \mathcal{N}(v)}\frac{1}{d(u)}\Big)^{-1}}$.
Then, (\ref{eq:var_Tfn}) follows from (\ref{eq:Exp_qz}) and (\ref{eq:Exp_qz_squared}).

Next we prove (\ref{eq:bias_ub_Tfn}).
\begin{align}
\bias\{\estFN\} &= \mathbb{E}\{q(Z)\}  - \mathbb{E}\{f(X)\} 			\nonumber\\
&=\frac{1}{n}\mathds{1}^TD^{-1}A D^{-1}Af 	 - \frac{\mathds{1}^Tf}{n} \nonumber\\
&=\frac{\mathds{1}^TD^{-\frac{1}{2}}}{n} \Big(			\mathcal{A}^2 - I				\Big)D^{\frac{1}{2}}f \nonumber
\end{align}
Hence,
\begin{align}
|\bias\{\estFN\}| &\leq  \bigg|\bigg|	\frac{\mathds{1}^TD^{-\frac{1}{2}}}{n} \Big(			\mathcal{A}^2 - I				\Big)D^{\frac{1}{2}}f \bigg|\bigg| \nonumber	\\
&= \frac{1}{n}(\lambda_n^2-1)||D^{\frac{1}{2}}f||\times ||\mathds{1}^TD^{-\frac{1}{2}}|| \nonumber
\end{align}
{which implies,} 
\begin{align}
\bias\{\estFN\}^2 &\leq \frac{M}{n}(\lambda_n^2-1)^2\frac{||D^{\frac{1}{2}}f||^2}{M}\times \frac{\sum_{v \in V}\frac{1}{d(v)}}{n} \nonumber\\
&=(\lambda_n^2-1)^2\mathbb{E}\{f(Y)\}\times \frac{\mathbb{E}\{d(X)\}}{\bar{d}_{hm}} \nonumber
\end{align}
and (\ref{eq:bias_ub_Tfn}) follows. 
\end{proof}

Eq.~(\ref{eq:bias_ub_Tfn}) shows that the bias of the estimate $\estFN$ depends on the smallest singular value of the normalized adjacency matrix $\mathcal{A}$. This suggests that, the bias of the estimate $\estFN$ based on second version of friendship paradox depends on spectral properties of the network as opposed to the estimate $\estRW$~(obtained via Algorithm~\ref{alg:RW_Sampling}) based on the first version of the friendship paradox~(Theorem~\ref{th:friendship_paradox_Feld}). 

\vspace{0.25cm}
\noindent
{\bf Summary of Statistical Analysis:} The above results~(Theorem~\ref{th:NEP_with_vs_without_FP} to Theorem~\ref{th:bias_var_Tfn}) motivate the use of NEP with friendship paradox based sampling (Algorithm~\ref{alg:RW_Sampling} and Algorithm~\ref{alg:Friend_of_Node_Sampling}) compared to the intent polling and NEP without friendship paradox (i.e. naive NEP). Theorem~\ref{th:NEP_with_vs_without_FP} showed that the two friendship paradox based NEP algorithms have smaller MSE compared to the naive NEP method when labels are independently and identically distributed. Then, Theorem~\ref{th:bias_var_Trw} characterized the bias and variance of the estimate $\estRW$ obtained via Algorithm~\ref{alg:RW_Sampling} and Corollary~\ref{cor:b_range} illustrated that it has a smaller MSE compared to intent polling for small sampling budget ${\samplingbudget}$ values. Further, Theorem \ref{th:bias_var_Trw} also showed that the bias and variance of the estimate $\estRW$ are affected by the degree-label correlation and the expansion of the network respectively. Next, Theorem~\ref{th:bias_var_Tun} characterized the bias and variance of the naive NEP estimate $\estUN$ and Corollary~\ref{cor:var_upperbounds_comparisn_Tun_Trw} illustrated how NEP with friendship paradox outperforms naive NEP (without friendship paradox). Finally, Theorem~\ref{th:bias_var_Tfn} characterized the bias and variance of estimate $\estFN$ produced by the Algorithm~\ref{alg:Friend_of_Node_Sampling} based on the second version of the friendship paradox (Theorem~\ref{th:fosd_X_Z}). It shows that the bias of estimate $\estFN$ depends on the spectral properties of the network as opposed the estimate $\estRW$ based on the first version of the friendship paradox.

\section{Empirical and Simulation Results}
\label{sec:experiments}
The aim of this section is to evaluate Algorithm~\ref{alg:RW_Sampling} and Algorithm~\ref{alg:Friend_of_Node_Sampling} on five large scale real world social networks as well as synthetic network datasets in order to obtain insights that complement the analytical results presented in Sec.~\ref{sec:analysis}. More specifically,
\begin{compactenum}
\item Sec.~\ref{subsec:experiments_real_world_SNs} evaluates Algorithm~\ref{alg:RW_Sampling}, Algorithm~\ref{alg:Friend_of_Node_Sampling}, naive NEP and intent polling on five real world social networks with different degree-label correlation coefficients. 

\item Sec.~\ref{subsec:simulations} evaluates Algorithm~\ref{alg:RW_Sampling}, Algorithm~\ref{alg:Friend_of_Node_Sampling}, naive NEP and intent polling on networks that are obtained from two well known models: configuration model \cite{molloy1995critical} and  Erd\H{o}s-R\'{e}nyi~($G(n,p)$) model\cite{newman2002random}. 
\end{compactenum}
The key conclusions that can be drawn from these experiments and simulations, and how they relate to the analytical results, are then discussed in detail in Sec.~\ref{sec:discussion}. 

Before proceeding to present the results, we define three key variables that are widely used in social network analysis. 
\begin{compactenum}
	\item {\bf Degree distribution} $P(k)$ is the probability that a randomly chosen node has $k$ neighbors.
	\vspace{0.25cm}
	
	\item {\bf Neighbor degree correlation (assortativity) coefficient} is defined as,
	\begin{equation}
	\label{eq:deg_deg_corr}
	r_{kk} = \frac{1}{\sigma_q^2}\sum_	{k,k'}kk'\Big(e(k,k')  - q(k)q(k')\Big)
	\end{equation} where, $e(k,k')$ is the probability of nodes at the ends of a randomly chosen edge have degrees $k$ and $k'$ (joint degree distribution of neighbors), $q(k)$ is the probability that a random neighbor has $k$ neighbors (marginal distribution of $e(k,k')$) and $\sigma_q$ is the standard deviation with respect to $q$.
	\vspace{0.25cm}
	
	\item {\bf Degree-label correlation coefficient} is defined as,
	\begin{align}
		\rho_{kf} &= \frac{1}{\sigma_k\sigma_f}\sum_{k}k\Big(\mathbb{P}(f{(X)} = 1,d(X) = k) \nonumber\\
		&\hspace{3.0cm}- \mathbb{P}(f{(X)} = 1)P(k) \Big) 	\label{eq:deg_label_corr}
	\end{align}
	where, $\sigma_k$ and $\sigma_f$ are the standard deviations of the degree of a random node and the label of a random node respectively.
\end{compactenum}
A detailed discussion of these variables and their effects can be found in \cite{lerman2016}.

\subsection{Real World Networks}
\label{subsec:experiments_real_world_SNs}

{\bf Dataset Description: }The datasets used in this subsection are openly available from the Stanford Network Analysis Project (SNAP) \cite{snapnets}. Below, we describe each dataset briefly.
\begin{compactenum}
	\item \textit{Facebook Social Circles} \cite{leskovec2012learning}: This dataset consists of ``circles" (or ``friends lists") from Facebook that were collected using the Facebook App. Total number of nodes and edges in the network constructed from this dataset are $4039$ and $88234$ respectively. The neighbor degree correlation coefficient $r_{kk}$ (defined in (\ref{eq:deg_deg_corr})) of the network is $0.06$
	
	\item \textit{Co-authorship Network} \cite{leskovec2007graph}: This dataset contains the scientific collaborations between authors of papers submitted to General Relativity and Quantum Cosmology category in the Arxiv website. More specifically, an author $i$ co-authoring a paper with author $j$ will be represented by an undirected edge between the two nodes $i$ and $j$ in the network. Total number of nodes and edges in the network constructed from this dataset are $5242$ and $14496$ respectively. The neighbor degree correlation coefficient $r_{kk}$ (defined in (\ref{eq:deg_deg_corr})) of the network is $0.66$.
	
	\item \textit{Athlete Network} \cite{rozemberczki2018gemsec}: This dataset contains Facebook page networks of athletes. The nodes in the network represent the Facebook pages of athletes and the edges represent mutual likes among them. Total number of nodes and edges in the network constructed from this dataset are $13,866$ and $86,858$ respectively. The neighbor degree correlation coefficient $r_{kk}$ (defined in (\ref{eq:deg_deg_corr})) of the network is $-0.03$.
	
	\item \textit{Politician Network} \cite{rozemberczki2018gemsec}: This dataset contains Facebook page networks of politicians. The nodes in the network represent the Facebook pages of politicians and the edges represent mutual likes among them. Total number of nodes and edges in the network constructed from this dataset are $5908$ and $41729$ respectively. The neighbor degree correlation coefficient $r_{kk}$ (defined in (\ref{eq:deg_deg_corr})) of the network is $0.02$.
	
	\item \textit{Company Network} \cite{rozemberczki2018gemsec}: This dataset contains Facebook page networks of different companies. The nodes in the network represent the Facebook pages of companies and the edges represent mutual likes among them. Total number of nodes and edges in the network constructed from this dataset are $14,113$ and $52,310$ respectively. The neighbor degree correlation coefficient $r_{kk}$ (defined in (\ref{eq:deg_deg_corr})) of the network is $0.01$.
\end{compactenum}

\vspace{0.25cm}
\noindent
{\bf Label swapping procedure for modifying degree-label correlation:} Given a graph $G = (V,E)$, we first assign labels $f(v)$ to each node $v \in V$ with a fixed probability. Then, to set the degree-label correlation coefficient defined in (\ref{eq:deg_label_corr}) to a desired value, we utilize the label swapping procedure followed in~\cite{lerman2016}: a node $v_0$ with a label $f(v_0) = 0$ and a node $v_1$ with a label $f(v_1) = 1$ are selected randomly and their labels are swapped if $d(v_0) < d(v_1)$ (respectively, $d(v_0) > d(v_1)$) to increase (respectively, decrease) the degree-label correlation coefficient $\rho_{kf}$ to the desired value (or until it no longer changes).  We consider $\rho_{kf} = -0.1, 0, 0.1$ in our experiments to study the effect of negative and positive degree-label correlations on the accuracy of the polling algorithms.

\vspace{0.25cm}
\noindent
{\bf Empirical Results: } The MSE and variance of the four polling methods (Algorithm~\ref{alg:RW_Sampling}, Algorithm~\ref{alg:Friend_of_Node_Sampling}, intent polling and naive NEP) were estimated using Monte-Carlo simulation over $600$ independent iterations for each value of the sampling budget ${\samplingbudget}$ from~$1$ to approximately~$1\%$ of the total number of nodes in the network. The results are displayed in~Fig.~\ref{fig:empirical_results}. The conclusions and insights that can be drawn from these empirical results and how they relate to the analytical results are discussed in Sec.~\ref{sec:discussion}.

\begin{figure*}[]
	\centering
	\begin{subfigure}[!h]{0.3\textwidth}
		\centering
		\includegraphics[width=2.6in]{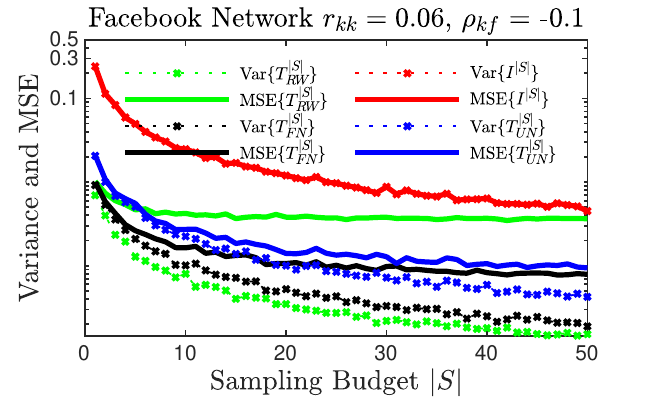}
		\caption{}
	\end{subfigure}\hfill
	\begin{subfigure}[!h]{0.3\textwidth}
		\centering
		\includegraphics[width=2.6in]{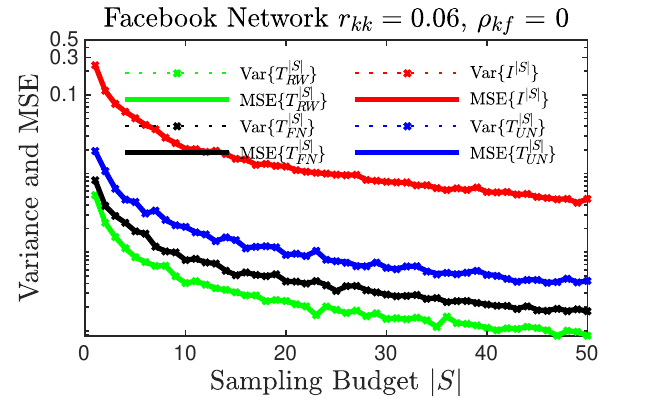}
		\caption{}
	\end{subfigure}\hfill 
	\begin{subfigure}[!h]{0.3\textwidth}
		\centering
		\includegraphics[width=2.6in] {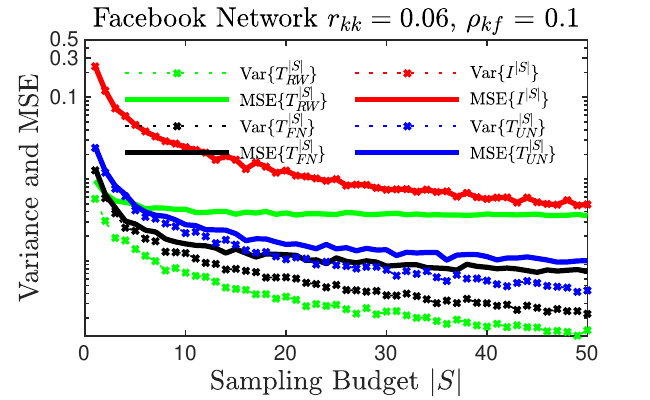}
		\caption{}
	\end{subfigure}

	\begin{subfigure}[!h]{0.3\textwidth}
		\centering
		\includegraphics[width=2.6in]{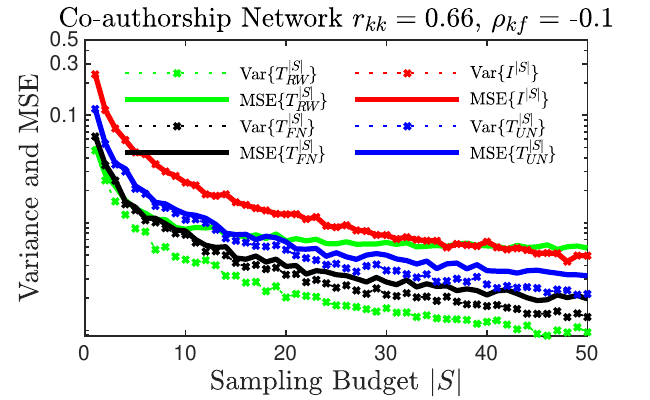}
		\caption{}
	\end{subfigure}\hfill
	\begin{subfigure}[!h]{0.3\textwidth}
		\centering
		\includegraphics[width=2.6in]{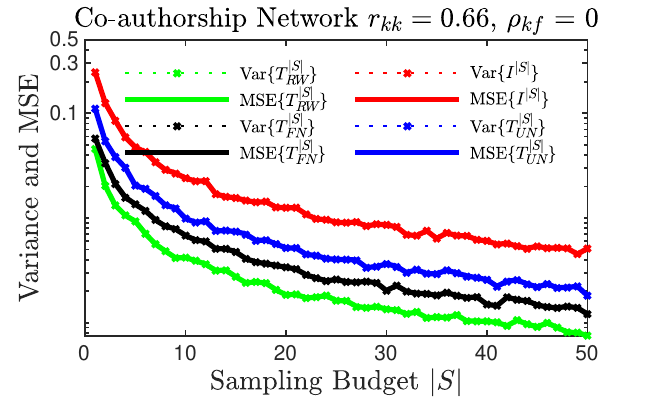}
		\caption{}
	\end{subfigure}\hfill 
	\begin{subfigure}[!h]{0.3\textwidth}
		\centering
		\includegraphics[width=2.6in] {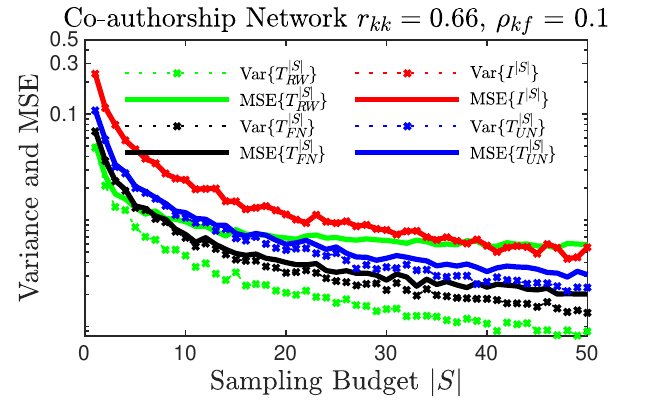}
		\caption{}
	\end{subfigure}

	\begin{subfigure}[!h]{0.3\textwidth}
		\centering
		\includegraphics[width=2.6in]{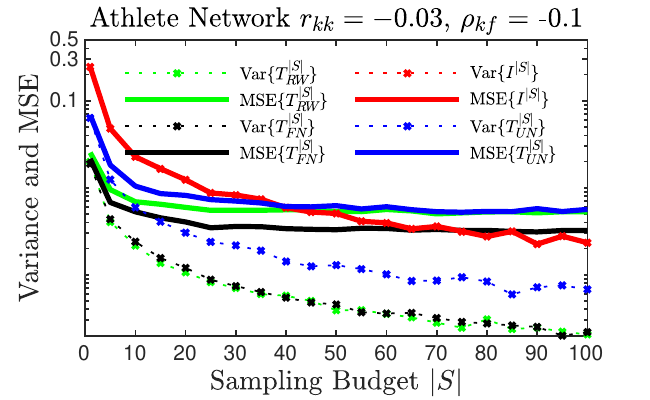}
		\caption{}
	\end{subfigure}\hfill
	\begin{subfigure}[!h]{0.3\textwidth}
		\centering
		\includegraphics[width=2.6in]{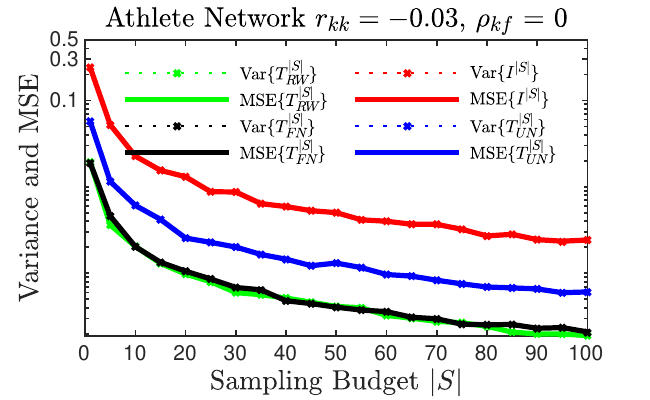}
		\caption{}
	\end{subfigure}\hfill 
	\begin{subfigure}[!h]{0.3\textwidth}
		\centering
		\includegraphics[width=2.6in] {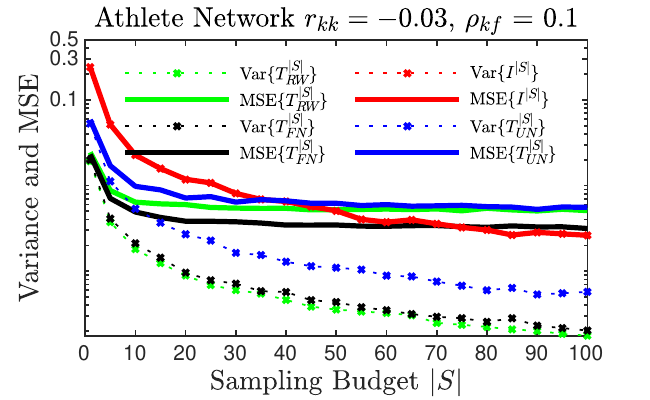}
		\caption{}
	\end{subfigure}

	\begin{subfigure}[!h]{0.3\textwidth}
		\centering
		\includegraphics[width=2.6in]{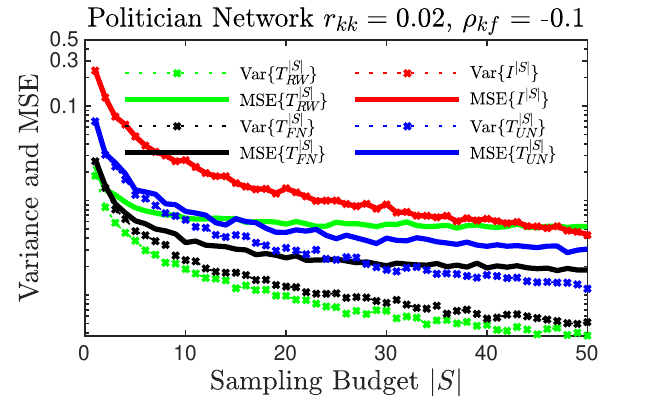}
		\caption{}
	\end{subfigure}\hfill
	\begin{subfigure}[!h]{0.3\textwidth}
		\centering
		\includegraphics[width=2.6in]{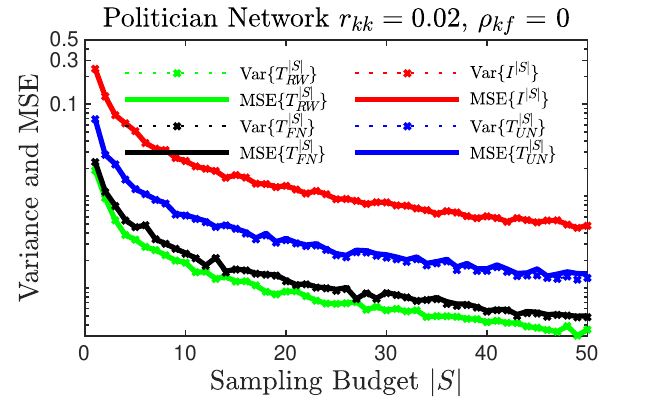}
		\caption{}
	\end{subfigure}\hfill 
	\begin{subfigure}[!h]{0.3\textwidth}
		\centering
		\includegraphics[width=2.6in] {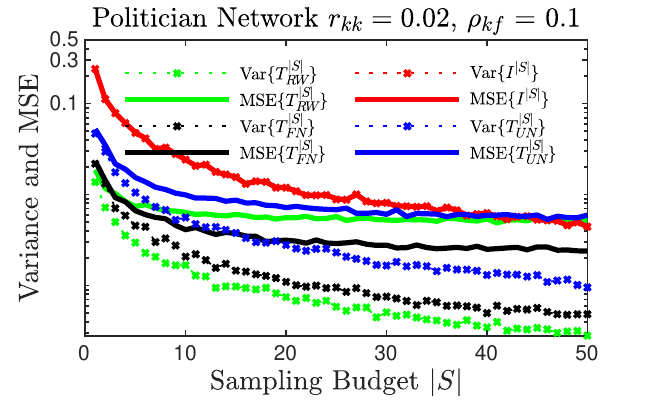}
		\caption{}
	\end{subfigure} 
%
\ContinuedFloat
	\begin{subfigure}[!h]{0.3\textwidth}
		\centering
		\includegraphics[width=2.6in]{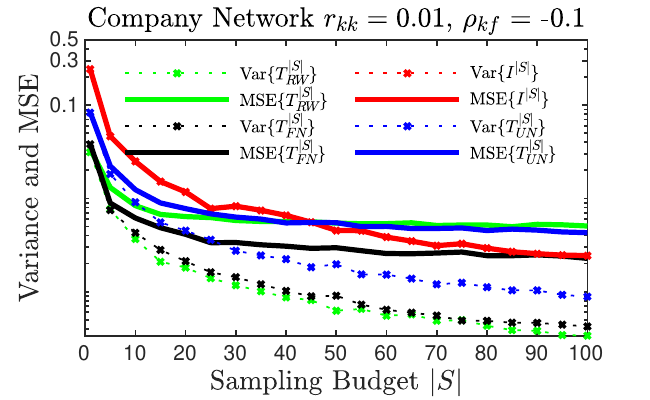}
		\caption{}
	\end{subfigure}\hfill
	\begin{subfigure}[!h]{0.3\textwidth}
		\centering
		\includegraphics[width=2.6in]{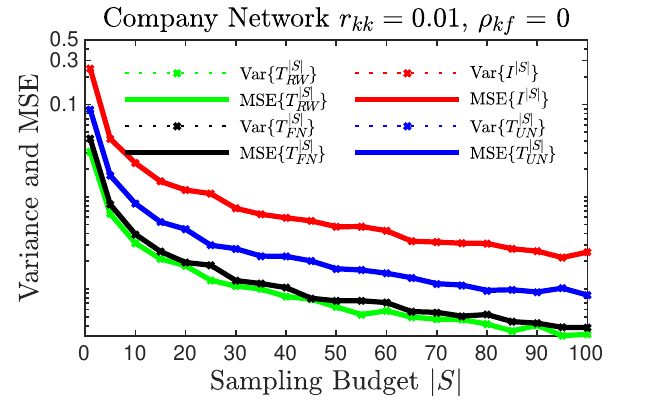}
		\caption{}
	\end{subfigure}\hfill 
	\begin{subfigure}[!h]{0.3\textwidth}
		\centering
		\includegraphics[width=2.6in] {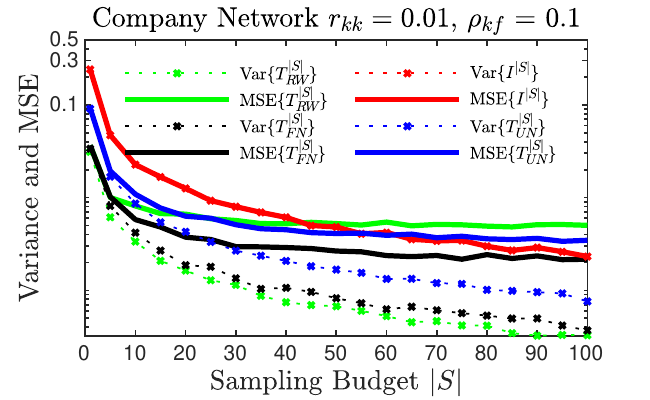}
		\caption{}
	\end{subfigure}
\caption{Empirical MSE and Variance of estimates $\estRW$ (Algorithm~\ref{alg:RW_Sampling}), $\estFN$ (Algorithm~\ref{alg:Friend_of_Node_Sampling}), $\estIP$(intent polling) and $\estUN$ (naive NEP) on five real world datasets (described in Sec.~\ref{subsec:experiments_real_world_SNs}). The subplots show that friendship paradox based NEP methods (Algorithm~\ref{alg:RW_Sampling} and Algorithm~\ref{alg:Friend_of_Node_Sampling})) are more statistically efficient compared to intent polling and naive NEP and, achieves a bias-variance trade-off based on the length of the random walk.}
\label{fig:empirical_results}
\end{figure*}

\subsection{Numerical Examples}
\label{subsec:simulations}

{\bf Generative Models for Graphs: } We use the following two generative models to yield two different types of degree distributions: power-law degree distribution and exponential degree distribution. In all experiments below, we consider graphs with $n = 5000$ nodes. 
\begin{compactitem}
	\item {\bf Configuration Model \cite{molloy1995critical}:} Generate $k$ half-edges for each of the $n$ nodes where $ k \sim ck^{-\alpha}$ (where $c$ is a normalizing constant) and then, connect each half-edge to the another randomly selected half-edge avoiding self loops.  This model yields a power-law degree distribution\footnote{The power-law degree distribution is generally accepted as a key feature of many real world networks such as World Wide Web, Internet and social networks \cite{mislove2007measurement,newman2002random,albert2002statistical, adamic2001search} with a power-law exponent $2<\alpha < 3$\cite{boguna2003epidemic}. Further, it has been shown that friendship paradox and some of its  effects are amplified in the presence of such power-law degree distributions\cite{lerman2016,eom2015tail}. 
	}~$p(k) = ck^{-\alpha}$. We consider two cases: $\alpha = 2.1$ and $\alpha = 2.4$.
	
	\item {\bf Erd\H{o}s-R\'{e}nyi (G(n,p)) model\cite{newman2002random}:} Any two (distinct) nodes are connected by an edge with probability $p$. This model results in a Binomial degree distribution which can be approximated by a Poisson distribution for large $n$. We choose $p = 0.01, n= 5000$ to ensure that the graph has no isolated nodes with high probability. 
\end{compactitem}

\vspace{0.25cm}
\noindent
{\bf Newman's edge-rewiring procedure for modifying neighbor degree correlation:} We utilize the edge-rewiring procedure proposed in \cite{newman2002assortative} to change the assortativity coefficient $r_{kk}$~(\ref{eq:deg_deg_corr}) of the graphs generated using the above models to a desired value while preserving the degree distribution. In the edge-rewiring procedure, two uniformly chosen links  ${(v_1,v_2), (u_1,u_2) \in E}$ at each iteration are replaced with new links $(v_1,u_1), (v_2,u_2)$ if it increases (respectively, decreases) the value of the assortativity coefficient $r_{kk}$. The process is repeated until the desired value of the assortativity coefficient $r_{kk}$ is achieved (or until it no longer changes). 

\vspace{0.25cm}
\noindent
{\bf Simulation Results:} The four polling methods~(Algorithm~\ref{alg:RW_Sampling}, Algorithm~\ref{alg:Friend_of_Node_Sampling}, intent polling~(\ref{eq:intent_polling_estimate}) and naive NEP~(\ref{eq:naive_NEP})) were evaluated on the networks obtained using the simulation setup described above. The MSE of the polling methods were estimated using Monte-Carlo simulation over $600$ independent iterations. The resulting empirical MSE values for the configuration model (power-law degree distribution) are shown in Fig. \ref{fig:mse_pl_alpha_2pt4} and Fig. \ref{fig:mse_pl_alpha_3pt1} for power-law coefficient values ${\alpha = 2.4}$ and ${\alpha = 3.1}$ respectively.  Similarly, results obtained for Erd\H{o}s-R\'{e}nyi graphs (Poisson degree distribution)\footnote{ In the case of Erd\H{o}s-R\'{e}nyi graphs, we only consider assortativity coefficient $r_{kk} = 0$ since it cannot be changed significantly due to the homogeneity in the degree distribution.} are shown in Fig. \ref{fig:mse_ER_davg_50}. The conclusions and insights that can be drawn from these simulation results and how they relate to the analytical results are discussed in Sec.~\ref{sec:discussion}.

\begin{figure*}[]
	\centering
	\begin{subfigure}[!h]{0.3\textwidth}
		\centering
		\includegraphics[width=2.6in]{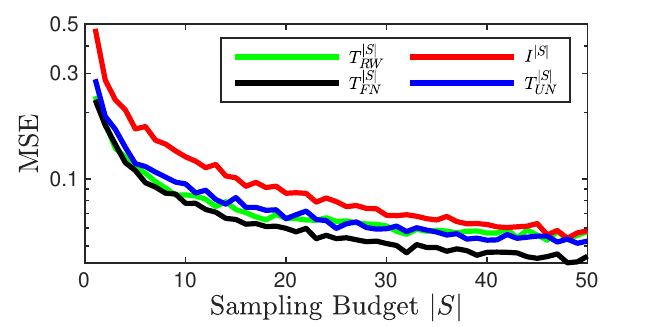}
		\caption{$r_{kk} = 0.2, \rho_{kf} = -0.2$}
		\label{subfig:MSE_alpha2pt4_rkk_0pt2_pkf_neg0pt2}
	\end{subfigure}\hfill
	\begin{subfigure}[!h]{0.3\textwidth}
		\centering
		\includegraphics[width=2.6in]{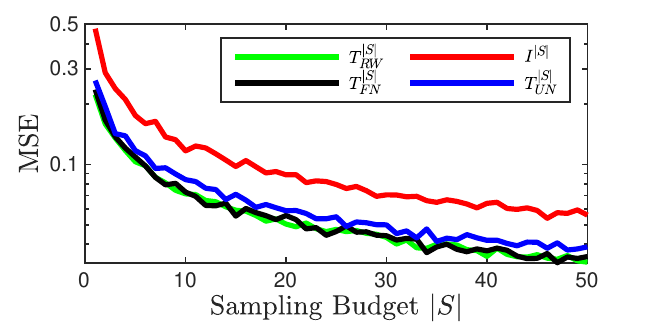}
		\caption{$r_{kk} = 0.2, \rho_{kf} = 0.0$}
		\label{subfig:MSE_alpha2pt4_rkk_0pt2_pkf_0}
	\end{subfigure}\hfill 
	\begin{subfigure}[!h]{0.3\textwidth}
		\centering
		\includegraphics[width=2.6in] {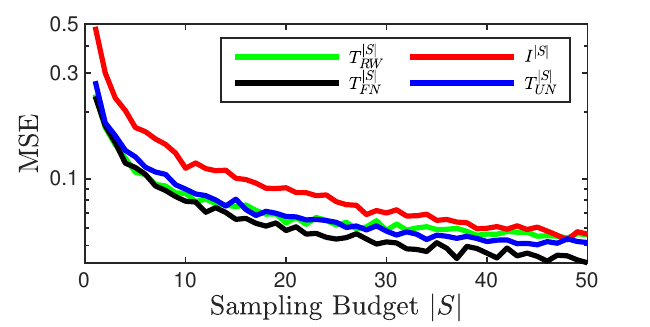}
		\caption{$r_{kk} = 0.2, \rho_{kf} = 0.2$}
		\label{subfig:MSE_alpha2pt4_rkk_0pt2_pkf_0pt2}
	\end{subfigure}

	\begin{subfigure}[!h]{0.3\textwidth}
	\centering
	\includegraphics[width=2.6in]{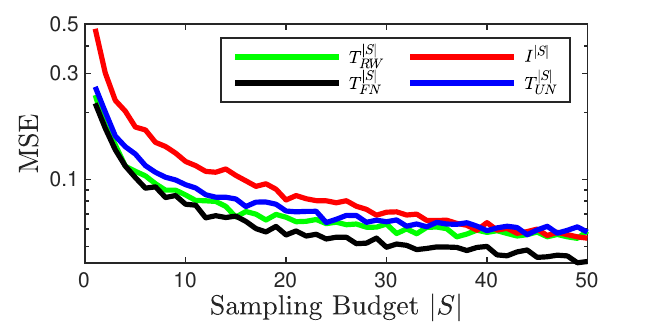}
	\caption{$r_{kk} = 0.0, \rho_{kf} = -0.2$}
	\label{subfig:MSE_alpha2pt4_rkk_0_pkf_neg0pt2}
\end{subfigure}\hfill
\begin{subfigure}[!h]{0.3\textwidth}
	\centering
	\includegraphics[width=2.6in]{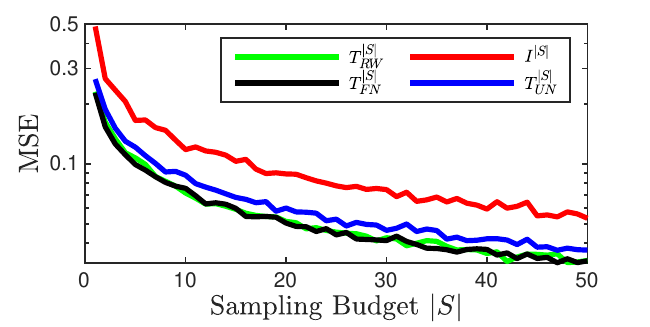}
	\caption{$r_{kk} = 0.0, \rho_{kf} = 0.0$}
	\label{subfig:MSE_alpha2pt4_rkk_0_pkf_0}
\end{subfigure}\hfill 
\begin{subfigure}[!h]{0.3\textwidth}
	\centering
	\includegraphics[width=2.6in] {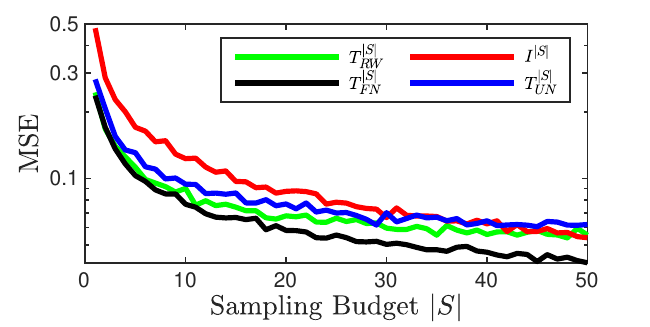}
	\caption{$r_{kk} = 0.0, \rho_{kf} = 0.2$}
	\label{subfig:MSE_alpha2pt4_rkk_0_pkf_0pt2}
\end{subfigure}

	\begin{subfigure}[!h]{0.3\textwidth}
	\centering
	\includegraphics[width=2.6in]{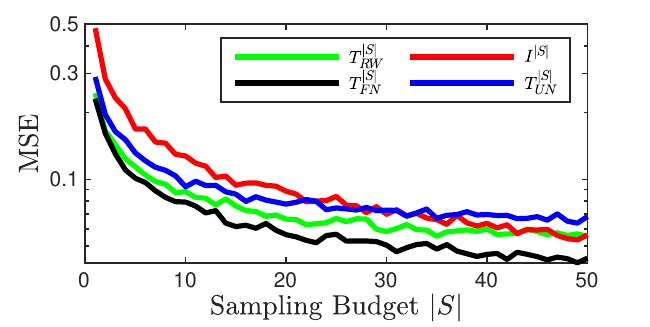}
	\caption{$r_{kk} = -0.2, \rho_{kf} = -0.2$}
	\label{subfig:MSE_alpha2pt4_rkk_neg0pt2_pkf_neg0pt2}
\end{subfigure}\hfill
\begin{subfigure}[!h]{0.3\textwidth}
	\centering
	\includegraphics[width=2.6in]{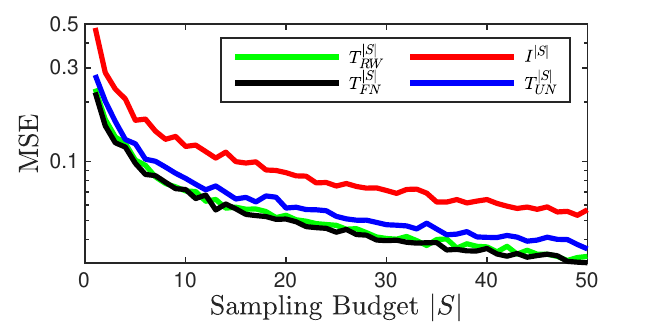}
	\caption{$r_{kk} = -0.2, \rho_{kf} = 0.0$}
	\label{subfig:MSE_alpha2pt4_rkk_neg0pt2_pkf_0}
\end{subfigure}\hfill 
\begin{subfigure}[!h]{0.3\textwidth}
	\centering
	\includegraphics[width=2.6in] {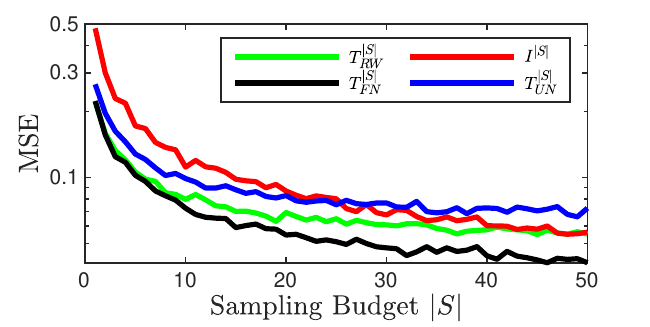}
		\caption{$r_{kk} = -0.2, \rho_{kf} = 0.2$}
	\label{subfig:MSE_alpha2pt4_rkk_neg0pt2_pkf_0pt2}
\end{subfigure}
	\caption{MSE of the estimates obtained using 
		the four polling algorithms
		for a power-law graph with parameter $\alpha = 2.4$ and different values of assortativity coefficient $r_{kk}$ and degree-label correlation coefficient $\rho_{kf}$. Subplots show that, for power-law networks, proposed polling methods have smaller MSE compared to alternative methods under general conditions.}
	\label{fig:mse_pl_alpha_2pt4} 
\end{figure*}

\begin{figure*}[]
	\centering
	\begin{subfigure}[!h]{0.3\textwidth}
		\centering
		\includegraphics[width=2.6in]{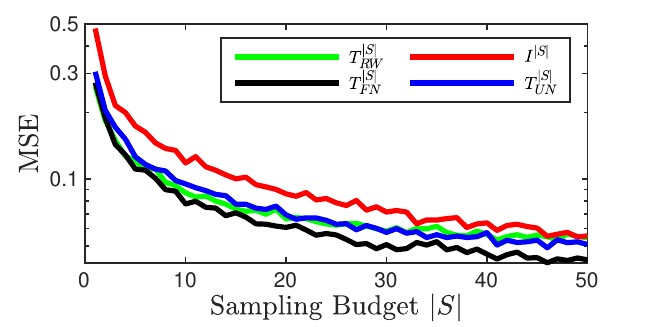}
		\caption{$r_{kk} = 0.2, \rho_{kf} = -0.2$}
		\label{subfig:MSE_alpha3pt1_rkk_0pt2_pkf_neg0pt2}
	\end{subfigure}\hfill
	\begin{subfigure}[!h]{0.3\textwidth}
		\centering
		\includegraphics[width=2.6in]{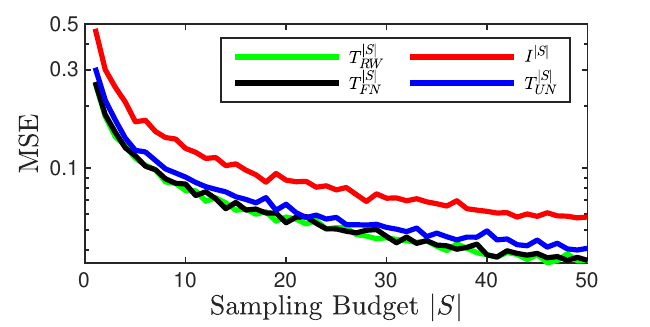}
		\caption{$r_{kk} = 0.2, \rho_{kf} = 0.0$}
		\label{subfig:MSE_alpha3pt1_rkk_0pt2_pkf_0}
	\end{subfigure}\hfill 
	\begin{subfigure}[!h]{0.3\textwidth}
		\centering
		\includegraphics[width=2.6in] {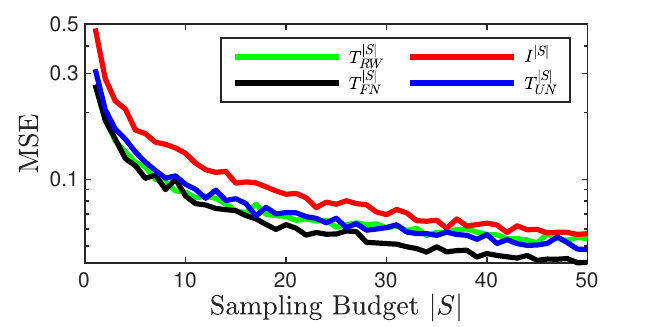}
		\caption{$r_{kk} = 0.2, \rho_{kf} = 0.2$}
		\label{subfig:MSE_alpha3pt1_rkk_0pt2_pkf_0pt2}
	\end{subfigure}
	
	\begin{subfigure}[!h]{0.3\textwidth}
		\centering
		\includegraphics[width=2.6in]{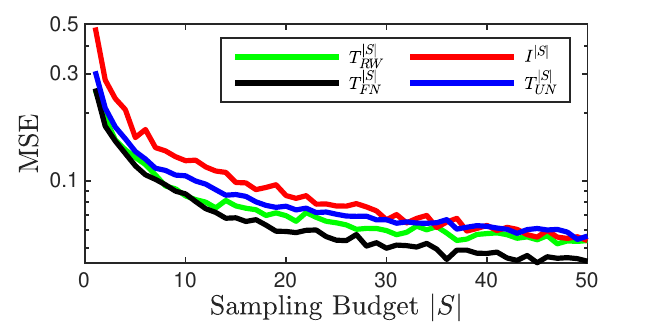}
		\caption{$r_{kk} = 0.0, \rho_{kf} = -0.2$}
		\label{subfig:MSE_alpha3pt1_rkk_0_pkf_neg0pt2}
	\end{subfigure}\hfill
	\begin{subfigure}[!h]{0.3\textwidth}
		\centering
		\includegraphics[width=2.6in]{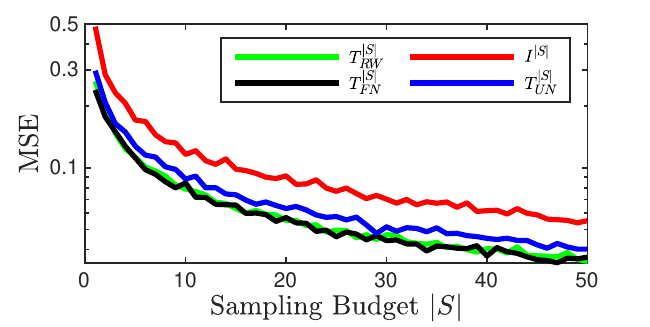}
		\caption{$r_{kk} = 0.0, \rho_{kf} = 0.0$}
		\label{subfig:MSE_alpha3pt1_rkk_0_pkf_0}
	\end{subfigure}\hfill 
	\begin{subfigure}[!h]{0.3\textwidth}
		\centering
		\includegraphics[width=2.6in] {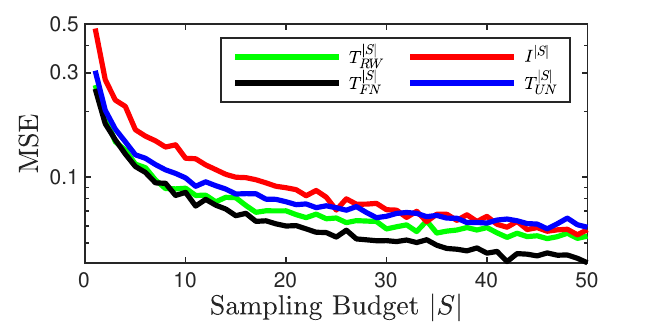}
		\caption{$r_{kk} = 0.0, \rho_{kf} = 0.2$}
		\label{subfig:MSE_alpha3pt1_rkk_0_pkf_0pt2}
	\end{subfigure}
	
	\begin{subfigure}[!h]{0.3\textwidth}
		\centering
		\includegraphics[width=2.6in]{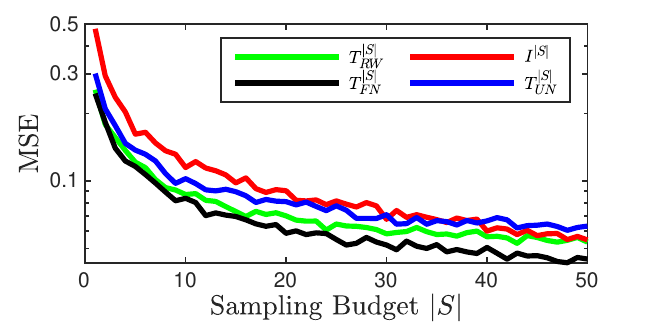}
		\caption{$r_{kk} = -0.1, \rho_{kf} = -0.2$}
		\label{subfig:MSE_alpha3pt1_rkk_neg0pt1_pkf_neg0pt2}
	\end{subfigure}\hfill
	\begin{subfigure}[!h]{0.3\textwidth}
		\centering
		\includegraphics[width=2.6in]{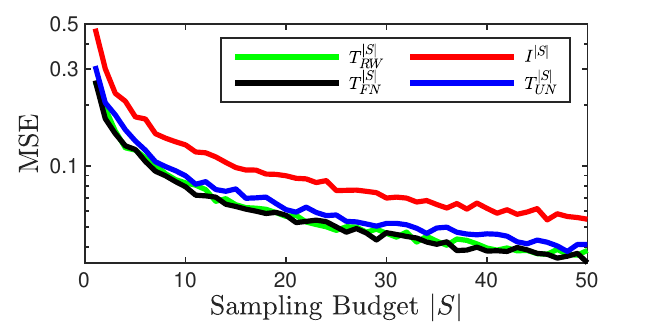}
		\caption{$r_{kk} = -0.1, \rho_{kf} = 0.0$}
		\label{subfig:MSE_alpha3pt1_rkk_neg0pt1_pkf_0}
	\end{subfigure}\hfill 
	\begin{subfigure}[!h]{0.3\textwidth}
		\centering
		\includegraphics[width=2.6in] {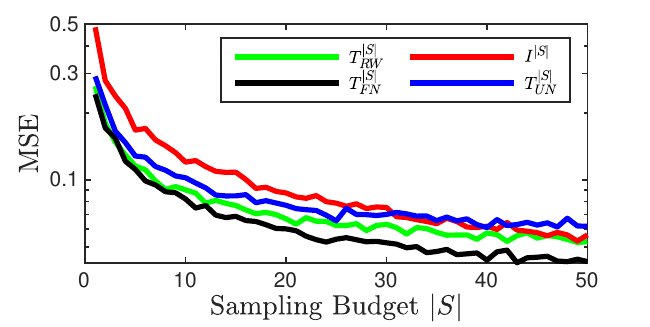}
		\caption{$r_{kk} = -0.1, \rho_{kf} = 0.2$}
		\label{subfig:MSE_alpha3pt1_rkk_neg0pt1_pkf_0pt2}
	\end{subfigure}
	\caption{MSE of the estimates obtained using 
		the four polling algorithms
		for a power-law graph with parameter $\alpha = 3.1$ and different values of assortativity coefficient $r_{kk}$ and degree-label correlation coefficient $\rho_{kf}$. Subplots show that, for power-law networks, proposed polling methods have smaller MSE compared to alternative methods under general conditions.}
	\label{fig:mse_pl_alpha_3pt1} 
\end{figure*}

\begin{figure*}[]
	\centering
	\begin{subfigure}[!h]{0.3\textwidth}
		\centering
		\includegraphics[width=2.6in]{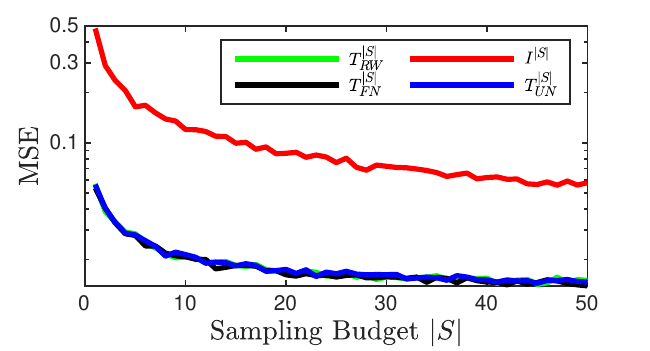}
		\caption{$r_{kk} = 0.0, \rho_{kf} = -0.2$}
		\label{subfig:MSE_davg50_rkk_0_pkf_neg0pt2}
	\end{subfigure}\hfill
	\begin{subfigure}[!h]{0.3\textwidth}
		\centering
		\includegraphics[width=2.6in]{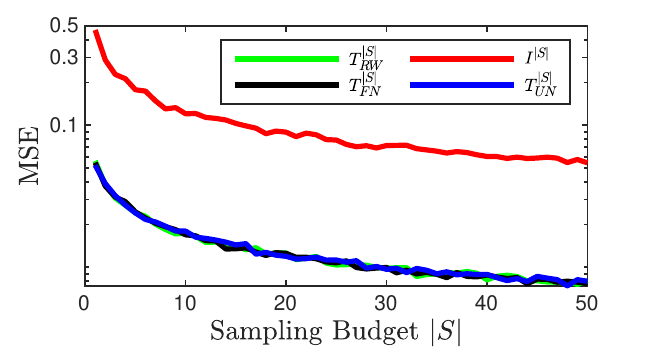}
		\caption{$r_{kk} = 0.0, \rho_{kf} = 0.0$}
		\label{subfig:MSE_davg50_rkk_0_pkf_0}
	\end{subfigure}\hfill 
	\begin{subfigure}[!h]{0.3\textwidth}
		\centering
		\includegraphics[width=2.6in] {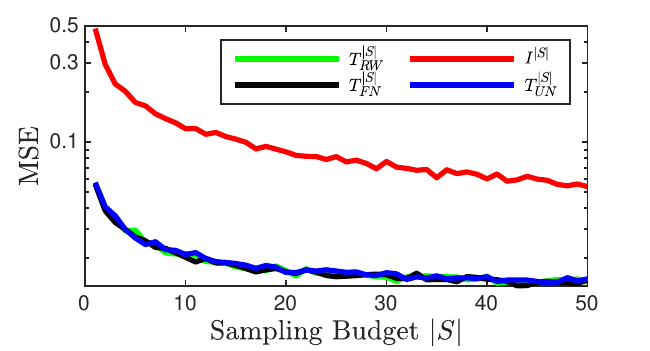}
		\caption{$r_{kk} = 0.0, \rho_{kf} = 0.2$}
		\label{subfig:MSE_davg50_rkk_0_pkf_0pt2}
	\end{subfigure}
	\caption{MSE of the estimates obtained using 
		the four polling algorithms
		for a Erd\H{o}s-R\'{e}nyi~(ER) graph with average degree 50 with assortativity coefficient $r_{kk} = 0$ and different values of degree-label correlation coefficient $\rho_{kf}$. The main conclusion is that, for ER graphs, the proposed friendship paradox based NEP method as well as the greedy deterministic sample selection method result in better performance compared to the intent polling method.}
	\label{fig:mse_ER_davg_50} 
\end{figure*}

\begin{figure*}[!h]
	\centering
	\begin{subfigure}[!h]{0.3\textwidth}
		\centering
		\includegraphics[width=2.6in]{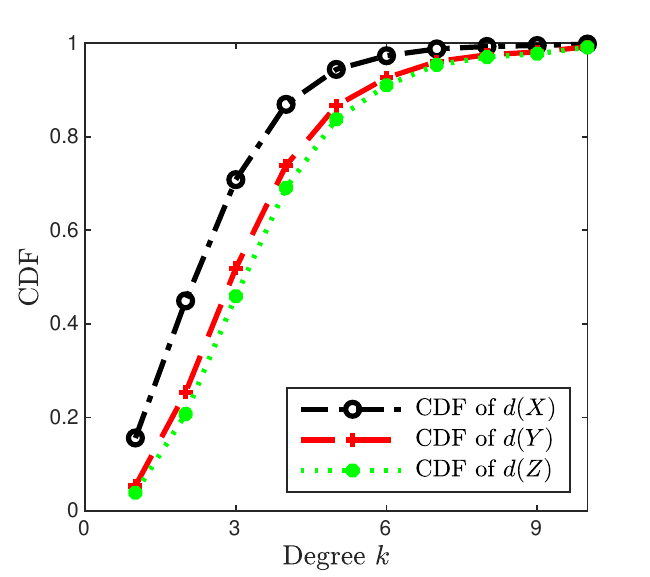}
		\caption{$r_{kk} = -0.2$ (disassortative network)}
		\label{subfig:fosdYZ_alpha2pt4_rkk_neg0pt2}
	\end{subfigure}\hfill
	\begin{subfigure}[!h]{0.3\textwidth}
		\centering
		\includegraphics[width=2.6in]{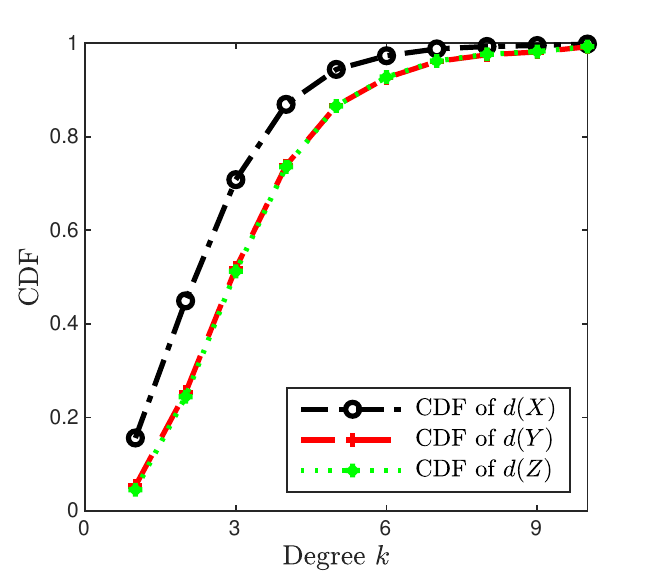}
		\caption{$r_{kk} = 0.0$}
		\label{subfig:fosdYZ_alpha2pt4_rkk_0}
	\end{subfigure}\hfill 
	\begin{subfigure}[!h]{0.3\textwidth}
		\centering
		\includegraphics[width=2.6in] {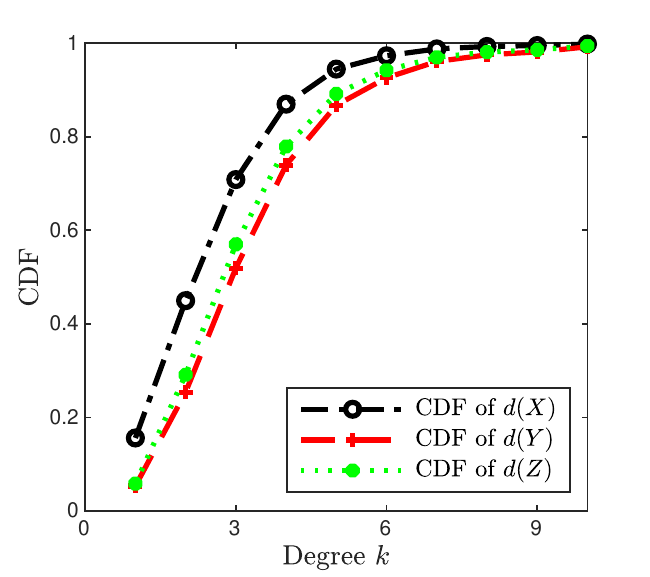}
		\caption{$r_{kk} = 0.2$ (assortative network)}
		\label{subfig:fosdYZ_alpha2pt4_rkk_0pt2}
	\end{subfigure}
	\caption{The cumulative distribution functions (CDF) of the degrees $d(X), d(Y), d(Z)$ of a random node~($X$), a random friend~($Y$) and a random friend of a random node~($Z$)  respectively, for three graphs with the same degree distribution (power-law distribution with a coefficient $\alpha = 2.4$) but different neighbor-degree correlation coefficients $r_{kk}$, generated using the Newman's edge rewiring procedure. This illustrates that $\mathbb{E} \{d(Z)\} \geq \mathbb{E}\{d(Y)\}$ for $r_{kk} \leq 0$ (Fig. \ref{subfig:fosdYZ_alpha2pt4_rkk_neg0pt2}) and vice-versa. This figure also shows how the distributions of $d(X), d(Y)$ remain invariant to the changes in the joint degree distribution $e(k,k')$ that preserve the degree distribution $P(k)$.}
	\label{fig:fosdYZ_alpha2pt4} 
\end{figure*}

\section{Discussion of Empirical and Simulation Results}
\label{sec:discussion}
This section discusses the insights and conclusions that can be drawn from the empirical and simulation results~(Sec.~\ref{sec:experiments}) and, how they relate to the analytical results~(Sec.~\ref{sec:analysis}). The main aim is to highlight how the analytical and experimental results help to identify the contexts for which each polling algorithm is suitable. 

\subsection{Power-law Graphs}
\noindent
{\bf Intent Polling vs. Friendship Paradox Based NEP: } Corollary~\ref{cor:b_range} stated that the friendship paradox based NEP Algorithm~\ref{alg:RW_Sampling} outperforms the classical intent polling in terms of the mean-squared error for small sampling budget ${\samplingbudget}$ values. The empirical results (Fig.~\ref{fig:empirical_results}) are consistent with Corollary~\ref{cor:b_range}; it can be seen that the MSE of the intent polling estimate $\estIP$ is larger than the MSE of the estimates $\estRW, \estFN$ obtained via the friendship paradox based NEP methods for smaller (less than $50$) sampling budget ${\samplingbudget}$ values. Further, MSE of estimates $\estRW, \estFN$ are smaller for all considered sampling budget ${\samplingbudget}$ values when the degree-label correlation coefficient $\rho_{kf}$ is zero (and hence, the friendship paradox based polling produces an unbiased estimate according to Theorem~\ref{th:bias_var_Trw}). Hence, both analytical and empirical results indicate that friendship paradox based NEP methods outperform the classical intent polling method when the sampling budget ${\samplingbudget}$ is constrained to be smaller or, the node labels are uncorrelated with the node degrees ($\rho_{kf} = 0$). 

\vspace{0.25cm}
\noindent
{\bf Effect of degree-label correlation ($\rho_{kf}$): } Fig.~\ref{fig:empirical_results} shows that the friendship paradox based polling Algorithms~\ref{alg:RW_Sampling}~and~\ref{alg:Friend_of_Node_Sampling} outperform both intent polling and naive NEP~(\ref{eq:naive_NEP}) for all considered sampling budget ${\samplingbudget}$ values when the node labels and node degrees are uncorrelated ($\rho_{kf} = 0$). 
When the node degrees and node labels are correlated ($\rho_{kf} \neq 0$), Algorithm~\ref{alg:Friend_of_Node_Sampling} still outperforms (in terms of MSE) both intent polling and naive NEP methods for all considered sampling budget ${\samplingbudget}$ values whereas naive NEP method outperforms Algorithm~\ref{alg:RW_Sampling} when ${\samplingbudget}$ becomes large due to the bias variance trade-off that is discussed next. 

\vspace{0.25cm}
\noindent
{\bf Friendship paradox based bias variance trade-off optimization: }
Note that the naive NEP estimate $\estUN$, NEP estimate $\estFN$ based on version 2 of friendship paradox (Theorem~\ref{th:fosd_X_Z}) and NEP estimate $\estRW$ based on version~1 of friendship paradox (Theorem~\ref{th:friendship_paradox_Feld}) correspond to random walks of length $N=0$ ($\estUN$), $N=1$ ($\estFN$) and $N\rightarrow \infty$ ($\estRW$). As such, $\estUN$ is based on responses of individuals sampled independent of their degree, $\estRW$ is based on responses of individuals sampled with probabilities proportional to their degrees and $\estFN$ achieves a trade-off by taking only a single step random walk. Therefore, it is intuitive that the variance of the estimates should satisfy ${\var\{\estRW\} \leq \var\{\estFN\}  \leq \var\{\estUN\}}$, agreeing with Corollary~\ref{cor:var_upperbounds_comparisn_Tun_Trw} and the empirical variances plotted in Fig.~\ref{fig:empirical_results}. However,  in terms of the mean-squared error (which takes the bias of the estimates also into account), $\estFN$ outperforms both $\estUN, \estRW$ (in terms of MSE) for all ${\samplingbudget}$ values considered in the empirical results. This observation suggests that the length of random walk (e.g. $N=1$ in the case of estimate $\estFN$) can be used to control the bias-variance trade-off of the friendship paradox based NEP methods. For example, if it is apriori known to the pollster that the labels have negligible correlation with the degrees (i.e. $\rho_{kf} \approx 0$ and hence, the bias of both $\estRW, \estFN$ will be negligible), she can choose to use $\estRW$ to minimize the variance of the estimate.

\vspace{0.25cm}
\noindent
{\bf Effect of the heavy-tails: } Comparing Fig.~\ref{fig:mse_pl_alpha_2pt4} with Fig.~\ref{fig:mse_pl_alpha_3pt1} shows that the MSE of Algorithm \ref{alg:RW_Sampling} and Algorithm \ref{alg:Friend_of_Node_Sampling} are smaller in the network with power-law coefficient~$\alpha = 2.1$ compared to that with~$\alpha = 2.4$. The difference in MSE is more pronounced for Algorithm~\ref{alg:Friend_of_Node_Sampling} compared to Algorithm~\ref{alg:RW_Sampling}. This suggests that friendship paradox based algorithms are more suitable when the underlying network has a heavy tailed degree distribution.  

\vspace{0.25cm}
\noindent
{\bf Effect of the Assortativity of the Network:} 
Different joint degree distributions $e(k,k')$ can yield  the same neighbor degree distribution $q(k)$ (explained in Sec.~\ref{sec:experiments}). Naturally, this marginal distribution $q(k)$ does not capture the joint variation of the degrees a random pair of neighbors. In Algorithm \ref{alg:RW_Sampling} (which samples neighbors uniformly), the degree distribution of the samples (i.e. queried nodes) is the neighbor degree distribution $q(k)$. Hence, the performance is not affected by the assortativity coefficient $r_{kk}$, which captures the joint variation of the degrees of a random pair of neighbors. This is seen in Fig. \ref{fig:mse_pl_alpha_2pt4} where, each column (corresponding to different $r_{kk}$ values) has approximately same MSE for Algorithm \ref{alg:RW_Sampling}. However, the MSE of Algorithm~\ref{alg:Friend_of_Node_Sampling} (that samples random friends $Z$ of random nodes) increases with assortativity $r_{kk}$ due to the fact that the distribution of degree $d(Z)$ of a random friend $Z$ of a random node is a function of the joint degree distribution. In order to highlight this further, Fig.~\ref{fig:fosdYZ_alpha2pt4} illustrates the effect of the neighbor degree correlation $r_{kk}$ on the distribution of $d(Z)$ (and the invariance of the distribution of $d(Y)$ to $r_{kk}$). This result indicates that, if the network is disassortative~($r_{kk} < 0$), Algorithm~\ref{alg:Friend_of_Node_Sampling} is a more suitable choice for polling compared to Algorithm~\ref{alg:RW_Sampling}.

\subsection{Erd\H{o}s-R\'{e}nyi Graphs}
The Erd\H{o}s-R\'{e}nyi ($G(n, p)$) model constructs a random graph as follows: start with $n$ vertices and then connect any two vertices  with probability $p$. Therefore, the average degree of the resulting graph is~$(n-1)p$. From~Fig.~\ref{fig:mse_ER_davg_50}, it can be seen that both Algorithm \ref{alg:RW_Sampling} and Algorithm \ref{alg:Friend_of_Node_Sampling} yield a smaller MSE than the intent polling method for an Erd\H{o}s-R\'{e}nyi network with $p=0.01$ and $n = 5000$. Also, Algorithm~\ref{alg:RW_Sampling} and Algorithm~\ref{alg:Friend_of_Node_Sampling} have approximately equal MSE. This is due to the fact that in an  Erd\H{o}s-R\'{e}nyi network, the neighbor degree correlation is approximately zero and therefore, distributions of the degree~$d(Y)$ of a random neighbor~$Y$ and the distribution of the degree~$d(Z)$ of a random neighbor~$Z$ of a random node are approximately equal.

\section{Conclusion}

This paper considered the problem of estimating the fraction of nodes in a graph that has a particular attribute (represented by a binary label~$1$) and, proposed a novel class of polling methods called Neighborhood Expectation Polling~(NEP). In NEP, each sampled individual responds with information about the fraction of her neighbors in the social network that has label~$1$. We considered the cases where, either: 1)~the pollster has no knowledge about the social graph but, has the ability to perform random walks on the graph 2)~uniformly sampled nodes from the unknown social graph are available. Two NEP algorithms were proposed (for case~1 and case~2) exploiting a form of network bias called ``friendship paradox". Theorems~\ref{th:NEP_with_vs_without_FP}~to~\ref{th:bias_var_Tfn} characterized the bias, variance and mean-squared error of the estimates~(obtained via the proposed algorithms) as well as how they depend on the properties of the underlying network (correlation between node labels and degree, expansion and, average, minimum and maximum degree, etc.). These results are useful for a pollster to incorporate prior knowledge about the underlying network to choose the best algorithm (in terms of statistical efficiency) and guarantee its performance.  Extensive empirical and simulation results are provided to illustrate the performance of the proposed methods under different network properties. These complement the theoretical analysis and provide insights into how the proposed algorithms would perform under different conditions. Both theoretical and experimental results indicate that the friendship paradox based NEP algorithms are capable of obtaining an estimate with a smaller mean-squared error with only a smaller (compared to alternative methods) number of respondents. 

%

%
%
%
%
%
%

\ifCLASSOPTIONcompsoc
  \section*{Acknowledgments}
\else
  \section*{Acknowledgment}
\fi
The authors thank Jon Kleinberg~(Department of Computer Science of Cornell University) and Rahmtin Rotabi~(Google) for helpful suggestions. 

\ifCLASSOPTIONcaptionsoff
  \newpage
\fi

\end{document}